\newtheorem{theorem}{Theorem}
\newtheorem{lemma}{Lemma}
\newtheorem{assumption}{Assumption}
\newtheorem{example}{Example}
\newtheorem*{main-result*}{Main result}
\newcommand{\W}{W}
\newcommand{\E}{\mathbb{E}}
\newcommand{\N}{\mathbb{N}}
\newcommand{\Cov}{{\rm Cov}}
\newcommand{\var}{{\rm var}}
\newcommand{\tr}{{\rm Tr}}
\newcommand{\overbar}[1]{\mkern 1.5mu\overline{\mkern-1.5mu#1\mkern-1.5mu}\mkern 1.5mu}
\newlength{\dhatheight}
\title{A note on quantum expanders}
\author{C\'ecilia Lancien} 
\author{Pierre Youssef}
\address{C\'ecilia Lancien. CNRS \& Institut Fourier, Universit\'e Grenoble Alpes, 118 rue des maths, 38610 Gi\`eres, France.}
\email{cecilia.lancien@univ-grenoble-alpes.fr}
\address{Pierre Youssef. Division of Science, NYU Abu Dhabi, Saadiyat Island, Abu Dhabi, UAE \& Courant Institute of Mathematical Sciences, New York University, 251 Mercer st, New York, NY 10012, USA.}
\email{yp27@nyu.edu}
\date{June 13 2025}
\keywords{Classical and quantum expanders; Spectral gap of random quantum channels; Norm of random matrices}
\begin{document}
	
\begin{abstract}
We prove that a wide class of random quantum channels with few Kraus operators, sampled as random matrices with some sparsity and moment assumptions, typically exhibit a large spectral gap, and are therefore optimal quantum expanders. In particular, our result provides a recipe to construct random quantum expanders from their classical (random or deterministic) counterparts. This considerably enlarges the list of known constructions of optimal quantum expanders, which was previously limited to few examples. Our proofs rely on recent progress in the study of the operator norm of random matrices with dependence and non-homogeneity, which we expect to have further applications in several areas of quantum information. 
\end{abstract}

\maketitle
	
\section{Introduction}

\subsection{Quantum states and channels} \hfill\vspace{0.1cm}

In what follows, for every $m\in \mathbb{N}$, we denote by $\mathcal M_m(\mathbb C)$ the set of $m\times m$ matrices with complex entries. Given $X\in \mathcal M_m(\mathbb C)$, we denote by $X^*$ its adjoint, by $X^t$ its transpose and by $\overbar{X}$ its (entry-wise) conjugate, and we write $X\succeq 0$ when $X$ is self-adjoint positive semidefinite. 

In quantum physics, the state of an $n$-dimensional system is described by a density operator on $\mathbb C^n$, i.e.~$\rho\in\mathcal M_n(\mathbb C)$ a self-adjoint positive semidefinite matrix with trace $1$. A transformation of such quantum system is described by a quantum channel, i.e.~$\Phi:\mathcal M_n(\mathbb C)\to\mathcal M_n(\mathbb C)$ a completely positive and trace-preserving linear map. Recall that a linear map $\Phi:\mathcal M_n(\mathbb C)\to\mathcal M_n(\mathbb C)$ is said to be
 \begin{itemize}
	\item positive if, for all $X\in \mathcal M_n(\mathbb C)$, $X\succeq 0 \Rightarrow \Phi(X)\succeq 0$, and completely positive (CP) if $\Phi\otimes I:\mathcal M_{n^2}(\mathbb C)\to\mathcal M_{n^2}(\mathbb C)$ is positive (where $I:\mathcal M_n(\mathbb C)\to\mathcal M_n(\mathbb C)$ denotes the identity map),
	\item trace-preserving (TP) if, for all $X\in \mathcal M_n(\mathbb C)$, $\tr(\Phi(X))=\tr(X)$.
\end{itemize}
  	
The action of a CP map $\Phi$ on $\mathcal M_n(\mathbb C)$ can always be described in the following (non-unique) way, called a Kraus representation (see e.g.~\cite[Section 2.3.2]{aubrun2017} or \cite[Chapter 2]{wolf2012}):
\begin{equation} \label{eq:Kraus}
	\Phi:X\in\mathcal M_n(\mathbb C) \mapsto \sum_{s=1}^d K_sXK_s^*\in\mathcal M_n(\mathbb C), 
\end{equation}
for some $d\in\N$ and some $K_1,\ldots,K_d\in\mathcal M_n(\mathbb C)$, called Kraus operators of $\Phi$. The fact that $\Phi$ is TP is equivalent to the following constraint on the $K_s$'s:
\begin{equation} \label{eq:TP-condition} \sum_{s=1}^d K_s^*K_s = I. \end{equation}
The smallest $d$ such that an expression of the form of equation \eqref{eq:Kraus} for $\Phi$ exists is called the Kraus rank of $\Phi$.

An equivalent way of characterizing the Kraus rank of a CP map $\Phi:\mathcal M_n(\mathbb C)\to\mathcal M_n(\mathbb C)$ is as the rank (i.e.~the number of non-zero eigenvalues) of its associated Choi matrix $C_\Phi\in\mathcal M_{n^2}(\mathbb C)$ (see again \cite[Section 2.3.2]{aubrun2017} or \cite[Chapter 2]{wolf2012}). The latter is defined as
\[ C_\Phi:= \Phi\otimes I\left(\sum_{i,j=1}^n E_{ij}\otimes E_{ij}\right) = \sum_{i,j=1}^n \Phi(E_{ij})\otimes E_{ij}, \]
where $E_{ij}\in\mathcal M_n(\mathbb C)$ has its entry $(i,j)$ equal to $1$ and all other entries equal to $0$. As a consequence of this alternative definition, it is clear that the Kraus rank of a CP map on $\mathcal M_n(\mathbb C)$ is always at most $n^2$ (as is the rank of a matrix on $\mathbb C^{n^2}$).
	
By the analogue of Perron-Frobenius theory to this context (see e.g.~\cite[Chapter 6]{wolf2012}), a quantum channel $\Phi$ on $\mathcal M_n(\mathbb C)$ always has a largest (in modulus) eigenvalue $\lambda_1(\Phi)$ which is equal to $1$, and consequently a largest singular value $s_1(\Phi)$ which is at least $1$. In addition, this largest eigenvalue has an associated eigenvector which is a positive semidefinite matrix. Hence $\Phi$ always has a fixed state, i.e.~a quantum state $\hat\rho$ such that $\Phi(\hat\rho)=\hat\rho$. $\Phi$ is said to be unital if its fixed state is the so-called maximally mixed state, i.e.~$\hat\rho=I/n$. It turns out that, if $\Phi$ is unital, then its largest singular value is also equal to $1$ (see e.g.~\cite[Chapter 4]{watrous2018}).

The constraint of being unital is dual to that of being TP. What we mean is that a CP map $\Phi$ is unital if and only if its dual (or adjoint) CP map $\Phi^*$, for the Hilbert-Schmidt inner product, is TP. It thus reads at the level of Kraus operators as 
 \begin{equation} \label{eq:unital-condition} \sum_{s=1}^d K_sK_s^* = I. \end{equation}
 
Note that, identifying $\mathcal M_n(\mathbb C)$ with $\mathbb C^n\otimes\mathbb C^n$, a linear map $\Phi:\mathcal M_n(\mathbb C)\to\mathcal M_n(\mathbb C)$ can equivalently be seen as a linear map $M_\Phi:\mathbb C^n\otimes\mathbb C^n\to\mathbb C^n\otimes\mathbb C^n$, i.e.~an element of $\mathcal M_{n^2}(\mathbb C)$. Concretely, a CP linear map
\[ \Phi:X\in\mathcal M_n(\mathbb C) \mapsto \sum_{s=1}^d K_sXK_s^*\in\mathcal M_n(\mathbb C) \]
can be identified with
\[ M_\Phi = \sum_{s=1}^d K_s\otimes\overbar K_s \in \mathcal M_{n^2}(\mathbb C). \]
This identification preserves the eigenvalues and singular values, i.e.~$\lambda_k(\Phi)=\lambda_k(M_\Phi)$ and $s_k(\Phi)=s_k(M_\Phi)$ for each $1\leq k\leq n^2$.

\subsection{Quantum expanders} \label{sec:def-expanders} \hfill\vspace{0.1cm}
 
Let $\mathbf\Phi:=(\Phi_{n,d_n})_{n\in\mathbb N}$ be a sequence of quantum channels, where for each $n\in\mathbb N$, $\Phi_{n,d_n}$ acts on $\mathcal M_n(\mathbb C)$ and has Kraus rank $d_n$. $\mathbf\Phi$ is called an expander if it satisfies the following properties, the third one being actually less crucial, as we will explain below:
\begin{itemize}
	\item[(1)] The $\Phi_{n,d_n}$'s have asymptotically a small Kraus rank, i.e.
 \[ \frac{d_n}{n^2}\underset{n\to\infty}{\longrightarrow} 0 . \]
	\item[(2)] The $\Phi_{n,d_n}$'s have asymptotically a small second largest (in modulus) eigenvalue, i.e.\ there exist $0\leq\varepsilon<1$ and $n_0\in\mathbb N$ such that, for all $n\geq n_0$,
 \[ \left|\lambda_2\left(\Phi_{n,d_n}\right)\right| \leq \varepsilon . \] 
        \item[(2')] The $\Phi_{n,d_n}$'s have asymptotically a small second largest singular value, i.e.\ there exist $0\leq\varepsilon<1$ and $n_0\in\mathbb N$ such that, for all $n\geq n_0$,
 \[ s_2\left(\Phi_{n,d_n}\right) \leq \varepsilon . \] 
        \item[(3)] The $\Phi_{n,d_n}$'s have asymptotically a fixed state with large entropy, i.e.\ denoting by $\hat\rho_{n,d_n}$ the fixed state of $\Phi_{n,d_n}$,
\[ S\left(\hat\rho_{n,d_n}\right) \underset{n\to\infty}{\longrightarrow} \infty . \]
\end{itemize}

In condition (3) above, $S(\cdot)$ stands for the von Neumann entropy, defined, for any state $\rho$, as $S(\rho):=-\tr(\rho\log\rho)$. Note that, given a state $\rho$ on $\mathbb C^n$, we always have $0\leq S(\rho)\leq\log n$, with equality in the first inequality iff $\rho$ is a pure state (i.e.~a rank-$1$ projector) and equality in the second inequality iff $\rho$ is the maximally mixed state. The entropy of a quantum state can thus be seen as a measure of how mixed it is: if $S(\rho)\geq\log r$ for some $1\leq r\leq n$, then the rank of $\rho$ is at least $r$. The advantage of the entropy over the rank is that it is a smoother quantity (which captures how uniformly distributed the non-zero eigenvalues of $\rho$ are rather than just how many they are). 

Let us explain the meaning of conditions (2) and (2'). Given a quantum channel $\Phi$ with fixed state $\hat\rho$, its second largest eigenvalue or singular value can be seen as quantifying how far $\Phi$ is from the ``ideal'' quantum channel $\Phi_{\hat\rho}:X\mapsto(\tr X)\hat\rho$, which sends any input state on $\hat\rho$. Indeed, $|\lambda_2(\Phi)|=|\lambda_1(\Phi-\Phi_{\hat\rho})|$ and $s_2(\Phi)=s_1(\Phi-\Phi_{\hat\rho})$, so the smaller $|\lambda_2(\Phi)|$ or $s_2(\Phi)$, the closer, in a sense, $\Phi$ to $\Phi_{\hat\rho}$. Now, the fact that $\hat\rho$ has a large entropy implies that $\Phi_{\hat\rho}$ has a large Kraus rank. More precisely, if $S(\hat\rho)\geq\alpha\log n$ for some $0\leq\alpha\leq 1$, then $\hat\rho$ has rank at least $n^\alpha$, and therefore $\Phi_{\hat\rho}$ has Kraus rank at least $n^{1+\alpha}$. The latter claim can be checked by observing that the Choi matrix associated to $\Phi_{\hat\rho}$ is simply $C_{\Phi_{\hat\rho}} = \hat\rho\otimes I$, which indeed has rank $n^{1+\alpha}$. 
Consequently, the fact that $\Phi$ has a small Kraus rank means that it is an approximation of $\Phi_{\hat\rho}$ which is much more economical than $\Phi_{\hat\rho}$ itself. This is what makes it particularly useful in practice. We see from this discussion that what we actually need for the quantum channel $\Phi$ to provide a significant Kraus rank reduction compared to $\Phi_{\hat\rho}$ is that it satisfies a joint condition (1) and (3) which would be: $d\ll n^{1+\alpha}$, where $\alpha=S(\hat\rho)/\log n$. In particular, if $d\ll n$, then $\Phi$ fulfils the latter condition whatever the entropy of $\hat\rho$. 

If the quantum channel $\Phi$ is self-adjoint, in the sense that $\Phi^*=\Phi$, then conditions (2) and (2') above are equivalent, because $|\lambda_2(\Phi)|=s_2(\Phi)$. If it is unital, then condition (2') is stronger than condition (2), because by Weyl's majorant theorem (see e.g.~\cite[Theorem~II.3.6]{MR1477662}) $|\lambda_1(\Phi)|+|\lambda_2(\Phi)| \leq s_1(\Phi)+s_2(\Phi)$, so the fact that $|\lambda_1(\Phi)|=s_1(\Phi)=1$ implies that $|\lambda_2(\Phi)| \leq s_2(\Phi)$. In general however, these two conditions are not comparable to one another. And depending on the context, either one or the other might be more relevant. For instance, $|\lambda_2(\Phi)|$ quantifies the speed of convergence of the dynamics $(\Phi^N(\rho))_{N\in\mathbb N}$ to its equilibrium $\hat\rho$ and the speed of decay of correlations in the 1D many-body quantum state that has $\Phi$ as so-called transfer operator. On the other hand, $s_2(\Phi)$ generally appears when quantifying the distance of $\Phi$ to $\Phi_{\hat\rho}$ in certain norm distances: it is exactly equal to the $2{\to}2$ norm (and can be related to several other $p{\to}q$ norms) of $\Phi-\Phi_{\hat\rho}$. In the remainder of this paper, we will mostly use condition (2') rather than condition (2) as our definition of expansion (i.e.~study expansion in terms of second largest singular value rather than second largest eigenvalue).

Let us briefly comment here on the fact that expander quantum channels can be seen as an analogue of classical expander graphs (see e.g.~\cite{hoory2006} for a complete review on the topic). An expander graph (say in the undirected case, for simplicity) is usually defined as a regular graph $G$ (i.e.~whose vertices all have the same degree) that combines the two properties of being sparse (i.e.~the degree is small compared to the number of vertices) and of having a normalized adjacency matrix $A$ (which is symmetric in this case) that has a large upper spectral gap (i.e.~$|\lambda_2(A)|$ is small compared to $1$). Since $G$ is regular, we have that $A$ leaves the uniform probability vector $u$ (which has maximal entropy) invariant. And the second largest eigenvalue of $A$ measures the distance of $A$ to $A_u$, the transition matrix that sends any input probability vector on $u$, as $|\lambda_2(A)|=|\lambda_1(A-A_u)|$. Now, $A_u$ is nothing else than the normalized adjacency matrix of the complete graph. So an expander can be seen as a graph which approximates, in a sense, the complete graph, even though it is sparse. To transpose this to the quantum setting, probability vectors are replaced by density operators, transition matrices are replaced by quantum channels, and the corresponding definition of an expander indeed becomes the one that we have given earlier.

In both the classical and quantum settings, the definition of expansion that we gave here is a spectral one. The latter is closely related to a more geometrical characterization of expansion, that quantifies either how uniformly distributed the edges of the graph are or how uniformly spread the output directions of the quantum channel are. Both classically and quantumly, spectral and geometrical expansion are known to be somewhat equivalent notions, in the sense that there are two-way inequalities relating the parameters measuring one or the other \cite{MR875835,bannink2020}. We work with spectral expansion here, as people often do, as it is mathematically more tractable.

\subsection{Previously known results and our contribution} \hfill\vspace{0.1cm}
	
Recently, several attempts have been made at exhibiting examples of such quantum expanders. It is known that any quantum channel $\Phi$ with Kraus rank $d$ verifies  $|\lambda_2(\Phi)|\geq c/\sqrt{d}$ and $s_2(\Phi)\geq c'/\sqrt{d}$, where $c,c'>0$ are universal constants (i.e.~independent of the underlying dimension $n$) but whose optimal values are known only for specific classes of quantum channels, such as unital ones \cite{hastings2007,BenAroya2008, timhadjelt2024}. Note that this is in complete analogy with the classical case and the corresponding Alon-Boppana lower bound on the second largest eigenvalue of the normalized adjacency matrix of a $d$-regular graph, namely $2\sqrt{d-1}/d-o(1)$ for large $n$, which is known to be tight \cite{MR875835,MR1124768}. 
An ``optimal'' quantum expander can thus be defined as a quantum channel $\Phi_{n,d}$ on $\mathcal M_n(\mathbb C)$ having Kraus rank $d$ and satisfying either $|\lambda_2(\Phi_{n,d})|\leq C/\sqrt{d}$ or $s_2(\Phi_{n,d})\leq C'/\sqrt{d}$ for some universal constants $C,C'>0$ (i.e.~whose second largest eigenvalue or singular value has the optimal scaling in $d$). In a similar fashion, a $d$-regular graph is called a Ramanujan graph when it is an (exactly) optimal classical expander, i.e.~when the second largest eigenvalue of its normalized adjacency matrix is upper bounded by $2\sqrt{d-1}/d$.  

In the classical setting, explicit constructions of families of Ramanujan graphs were achieved by Margulis \cite{MR939574} and Lubotzky, Philips and Sarnak \cite{MR963118} for specific values of $d$. The existence of Ramanujan $d$-regular bipartite graphs for every $d\geq 3$ was proved in a breakthrough work of Marcus, Spielman and Srivastava \cite{MR3374962}. Moreover, several models of random graphs have been shown to be almost Ramanujan. For instance, random $d$-regular graphs with fixed degree $d\geq 3$ \cite{MR2437174,MR4203039}, random $d$-regular graphs with growing degree $d$ \cite{MR1634360,MR3758727,MR3909972,MR4135670,Sarid}, and (moderately) dense Erd\H{o}s-R\'{e}nyi graphs, i.e.~with parameter $d\gg\log n$ \cite{MR3878726, MR3945756,MR4234995}.

The first attempts at exhibiting quantum expanders were inspired by classical constructions, in particular those based on Cayley graphs. This provided (semi-)explicit examples of quantum expanders, but none of them was optimal, either because the classical one was not (for an Abelian underlying group \cite{ambainis2004}) or because something was lost in the classical-to-quantum embedding (for a non-Abelian underlying group \cite{BenAroya2008,harrow2008,eisert2008}). In fact, there are only three known constructions of optimal quantum expanders up to date, all of which being random. The first model, considered by Hastings \cite{hastings2007}, consisted in sampling Kraus operators as $d/2$ independent Haar distributed unitaries on $\mathbb C^n$ and their adjoints, so that the corresponding random CP map was self-adjoint. This model was later analyzed differently by Pisier \cite{pisier2014} and very recently by Timhadjelt \cite{timhadjelt2024}, who also showed that its non-self-adjoint version exhibited a similar behaviour (i.e.~simply sampling Kraus operators as $d$ independent Haar distributed unitaries on $\mathbb C^n$). Then Gonz\'alez-Guill\'en, Junge and Nechita \cite{gonzalez2018} studied the case where Kraus operators are blocks of a Haar distributed isometry from $\mathbb C^n$ to $\mathbb C^n\otimes\mathbb C^d$. Finally, Lancien and P\'erez-Garc\'ia \cite{lancien2022} looked at a model where Kraus operators are sampled as $d$ independent Gaussian matrices on $\mathbb C^n$. The latter work is the only one where expansion is proved for a Kraus rank $d$ that is either fixed or growing with the dimension $n$: all other works studied only the regime of fixed $d$.
	
The goal of this work is to show that many other models of random quantum channels, beyond the three mentioned above, provide examples of optimal expanders as well. We will see that, in the regime of a Kraus rank that grows at least poly-logarithmically with the dimension, quantum expanders can be constructed by taking Kraus operators to be random matrices with independent entries that are quite arbitrary, as long as they satisfy some moments' growth assumption. Additionally, the model we introduce is general enough so that it incapsulates random matrices with a variance profile and even sparse random matrices (hence being a first step towards derandomization of optimal constructions). 
Informally, the recipe to construct optimal quantum expanders with Kraus rank at most $d$ goes as follows:
\begin{enumerate}
    \item Take $P$ to be the transition matrix of an irreducible Markov chain on $n$ sites where no transition probability is larger than some power of $1/\log n$, and with second singular value of order at most $1/\sqrt{d}$. As illustrative examples, $P$ could simply be the matrix with all entries equal to $1/n$, or more interestingly the adjacency matrix of an almost Ramanujan $d$-regular graph on $n$ vertices (with the extra assumption that $d$ is at least some power of $\log n$). 
    \item Take a random $n\times n$ complex matrix $W$ with independent centered entries satisfying some moment growth condition, and with variance profile given by $P^t$. 
    \item Take $W_1,\ldots,W_d$ independent copies of $W$ and define the random CP map $\Phi$ on $\mathcal M_n(\mathbb C)$ whose matrix representation is 
    $$M_\Phi=\frac{1}{d}\sum_{s=1}^d \W_s\otimes\overbar\W_s. $$ 
    Then $\Phi$ satisfies on average the TP condition \eqref{eq:TP-condition} and is with high probability an optimal quantum expander, as we prove in this note. 
\end{enumerate}

As already mentioned, this general procedure illustrates the abundance of examples of quantum expanders which could be realized. In particular, it is worth highlighting that the variance profile $P^t$ of the random Kraus operators can potentially have only $d$ non-zero entries per line and column (hence being very sparse for $d\ll n$), as soon as those non-zero entries correspond to edges of an optimally expanding $d$-regular graph. In this case, the above recipe uses an optimal classical expander (either random or deterministic) in order to construct a (random) optimal quantum expander. 
The exact moment assumptions as well as the conditions on the variance profile are stated in Section~\ref{sec:model} (Assumptions~\ref{a:indep entries}, \ref{a: profil variance}, \ref{a: 4 moment}, \ref{a: max variance}) and Section~\ref{sec:application} (Assumption~\ref{a:eta}). 
To keep the introduction light, we will state our main result in an informal way here and differ the precise definition of the class of models as well as corresponding examples to the next sections.  

 \begin{main-result*}[Informal]
 Given $n,d\in\mathbb N$ such that $(\log n)^{12}\leq d\leq n^2$, let $W\in\mathcal M_n(\mathbb C)$ be a random matrix with independent centered entries whose variance profile $\eta\in\mathcal M_n(\mathbb R)$ is a stochastic matrix such that $s_2(\eta)\leq C_0/\sqrt{d}$, and let $W_1,\ldots,W_d$ be independent copies of $W$. Define the random CP map $\Phi$ on $\mathcal M_n(\mathbb C)$, having Kraus rank at most $d$, as
	$$\Phi:X\in\mathcal M_n(\mathbb C)\mapsto \frac{1}{d}\sum_{s=1}^d W_sXW_s^*\in\mathcal M_n(\mathbb C). $$
	Then, $\Phi$ is such that, on average, its Kraus operators satisfy the TP condition \eqref{eq:TP-condition} and, with high probability, 
 $$ s_1(\Phi)\geq 1-\frac{C}{\sqrt{d}} \quad \text{and} \quad s_2(\Phi)\leq \frac{C}{\sqrt{d}}. $$ 
 What is more, by suitably renormalizing $W_1,\ldots,W_d$, one can obtain a corresponding random CP map $\tilde\Phi$ on $\mathcal M_n(\mathbb C)$, having Kraus rank at most $d$, which is TP (i.e.~a quantum channel) and such that, with high probability, $s_2(\tilde\Phi)\leq \tilde C/\sqrt{d}$.
\end{main-result*}

The rigorous statement corresponding to the previous informal statement, as well as to the description preceding it, is stated as Theorems~\ref{th:expander} and \ref{th:expander-TP}. 
The proof of this result crucially relies on recent progress in the understanding of the operator norm of non-homogeneous random matrices, and more precisely on refinements of the noncommutative Khintchine inequality (see \cite{junge2013,MR3531673,MR3878726,vanHandel,vanHandel2} and references therein). 
One of the motivations of the current work, beyond the specific result about quantum expanders, is in fact to bring these powerful results to the attention of the quantum information community. We expect them to find further applications in view of their generality.

 \subsection{Organization of the paper and notation} \hfill\vspace{0.1cm}

 The remainder of this paper is organized as follows. In Section \ref{sec:technical} we introduce a very general random matrix model with tensor product structure and prove an upper bound on its operator norm. The key technical tool that we use is a recent powerful result from \cite{vanHandel,vanHandel2}, which allows to estimate the operator norm of random matrices whose entries have dependence and non-homogeneity. In Section \ref{sec:application} we then show how the main result from Section \ref{sec:technical} implies that a wide class of random quantum channels are on average expanders. Finally, in Section \ref{sec:conclusion}, we discuss some consequences and perspectives.

 In what follows, given $x\in\mathbb C^m$, we denote by $\|x\|$ its Euclidean norm. And given $X\in\mathcal M_m(\mathbb C)$, we denote by $\|X\|_p:=(\tr(|X|^p))^{1/p}$ its Schatten $p$-norm, for all $p\geq 1$. In particular, $\|X\|_\infty$ stands for its operator (or spectral) norm (i.e.~its largest singular value) and $\|X\|_2$ stands for its Hilbert-Schmidt norm. Also, given a linear map $T$ (either on $\mathbb C^m$ or on $\mathcal M_m(\mathbb C)$), we denote by $\lambda_1(T),\lambda_2(T),\ldots$ its eigenvalues, with modulus in non-increasing order, and by $s_1(T),s_2(T),\ldots$ its singular values, in non-increasing order. 
	
\section{Operator norm estimate for a random matrix model with tensor product structure} \label{sec:technical} 

\subsection{Random matrix model} \label{sec:model} \hfill\vspace{0.1cm}
	
Let $\W=(W_{ij})_{1\leq i,j\leq n}$ be an $n\times n$ random matrix (either real or complex) satisfying the following assumptions:
	
	\begin{assumption}\label{a:indep entries}
		The entries $W_{ij}$, $1\leq i,j\leq n$, are independent, of mean $0$ and of variance $\eta_{ij}$, i.e.
  $$\forall\ 1\leq i,j\leq n,\ \E\left(W_{ij}\right)=0 \ \text{ and } \ \E\left(\left|W_{ij}\right|^2\right)=\eta_{ij}. $$ 
	\end{assumption}
	\begin{assumption}\label{a: profil variance}
		The variance matrix $\eta=(\eta_{ij})_{1\leq i,j\leq n}$ is doubly sub-stochastic, i.e.~each of its rows and columns sums up to at most $1$: 
		$$
		\forall \ 1\leq i\leq n,\ \sum_{j=1}^n \eta_{ij}\leq 1   \quad \text{and} \quad  \forall \ 1\leq j\leq n,\ \sum_{i=1}^n \eta_{ij}\leq 1.
		$$
	\end{assumption}
 	\begin{assumption}\label{a: 4 moment}
		There exist a universal constant $\mathbf{C}>0$ and a parameter $\beta\geq 0$ such that, for every $p\in \N$,
		$$
		\forall\ 1\leq i,j\leq n,\ \left(\E\left(\left|W_{ij}\right|^{2p}\right)\right)^{1/p} \leq \mathbf{C} p^\beta \eta_{ij}.
		$$
	\end{assumption}
	\begin{assumption}\label{a: max variance}
		There exists a universal constant $\mathbf{C'}>0$ such that 
		$$ \max_{1\leq i, j\leq n} \eta_{ij}\leq \frac{\mathbf{C'}}{(\log n)^{\alpha_\beta}}, $$
  where $\alpha_\beta=\max(\beta,2)$.
	\end{assumption}

	The class of random matrices satisfying the above assumptions is quite large. We provide some examples below. 
	
	\begin{example}[i.i.d.~heavy-tailed entries] \label{ex:bounded-moments}
		Consider a random variable $\xi$ (either real or complex) of mean $0$, variance $1$ and satisfying, for some universal constant $C>0$ and some parameter $\beta\geq 0$, the moment growth condition 
  $$\forall\ p\in\mathbb N, \ \left(\E\left(|\xi|^{2p}\right)\right)^{1/p} \leq C p^{\beta}.
  $$
  Special cases include $\beta=1$, where $\xi$ is called sub-Gaussian, and $\beta=2$, where $\xi$ is called sub-exponential. But we also allow for larger $\beta$, i.e.~for $\xi$ having heavier tails. 
  Consider an $n\times n$ random matrix $M$ whose entries are i.i.d.~copies of $\xi$, and set 
		$\W=M/\sqrt{n}$. 
		Clearly, $W$ satisfies all required assumptions, provided $n$ is large enough. 
	\end{example}
	
	\begin{example}[Gaussian matrices with variance profile] \label{ex:gaussian}
		Consider an $n\times n$ random matrix $W$ whose entries are independent and, for every $1\leq i,j\leq n$, $W_{ij}$ is a Gaussian random variable (either real or complex) with mean $0$ and variance $\sigma_{ij}^2$. Assume that the variance profile matrix $\sigma=(\sigma_{ij}^2)_{1\leq i,j\leq n}$ is doubly stochastic (i.e.~each of its rows and columns sums up to $1$) and satisfies 
		$$
		\max_{1\leq i,j\leq n} \sigma_{ij}^2\leq \frac{1}{(\log n)^2}. 
		$$
		Then $W$ satisfies all required assumptions, with $\beta=1$. 
	\end{example}

	\begin{example}[Random matrices with deterministic sparsity pattern] \label{ex:sparse}
		Let $n,r\in \N$ with $r\geq (\log n)^2$. Consider any undirected $r$-regular graph on $n$ vertices and let $A=(A_{ij})_{1\leq i,j\leq n}$ be the corresponding adjacency matrix. Let $\xi$ be a uniformly bounded random variable (either real or complex) of mean $0$ and variance $1$.  
		Let $(\xi_{ij})_{1\leq i,j\leq n}$ be i.i.d copies of $\xi$ and for every $1\leq i,j\leq n$, set $W_{ij}=A_{ij}\xi_{ij}/\sqrt{r}$. 
		Then $\W= (W_{ij})_{1\leq i,j\leq n}$ inherits the sparsity pattern of the undirected $r$-regular graph. 
		Moreover, since $\eta_{ij}= \eta_{ji}= A_{ij}/r$ for every $1\leq i,j\leq n$, $W$ clearly satisfies Assumptions~\ref{a:indep entries} and \ref{a: profil variance}. 
		Finally, since $\xi$ is bounded and $r\geq(\log n)^2$, one can check that $W$ satisfies Assumptions~\ref{a: 4 moment} and \ref{a: max variance}, with $\beta=0$. 
	\end{example}

Equipped with the class of random matrices introduced above, we now construct the random CP map which will be proven in the sequel to have a large gap between its first and second largest singular values. We let $\W\in\mathcal M_n(\mathbb C)$ be such random matrix, $\W_1,\ldots,W_d$ be independent copies of $W$, and we set 
\begin{equation} \label{eq:def-Y}
Y=\frac{1}{d}\sum_{s=1}^d \W_s\otimes\overbar\W_s \in\mathcal M_{n^2}(\mathbb C).
\end{equation}
In what follows, we will always implicitly assume that
$$ d\leq n^2, $$
the most interesting regime actually being when $d \ll n^2$. We will often need, for technical reasons, to suppose further that
$$ d\geq (\log n)^{\gamma}, $$
for some $\gamma>0$ that will depend on the context.
Our goal will now consist in controlling the first and second largest singular values of the random matrix $Y$ as defined by equation \eqref{eq:def-Y}, and hence of the corresponding random CP map
$$\Phi_Y:X\in\mathcal M_n(\mathbb C)\mapsto \frac{1}{d}\sum_{s=1}^d \W_sX\W_s^*\in\mathcal M_n(\mathbb C).$$
In order to do this, we will first aim at deriving an upper bound for $\Vert Y-\E(Y)\Vert_\infty$. The final result that we obtain is stated below.

\begin{theorem} \label{th:main-result}
Let $\W$ be an $n\times n$ random matrix satisfying Assumptions \ref{a:indep entries}, \ref{a: profil variance}, \ref{a: 4 moment} and \ref{a: max variance}. Let $Y=\frac{1}{d}\sum_{s=1}^d \W_s\otimes\overbar\W_s$, where $\W_s$, $1\leq s\leq d$, are independent copies of $\W$. If $d\geq (\log n)^{12+\delta}$, for some $\delta\geq 0$, then setting $\delta'=\min(\delta/12,1/4)$, we have
$$ \E \Vert Y-\E(Y) \Vert_\infty \leq \frac{2}{\sqrt{d}}\left(1+\frac{C_\beta}{(\log n)^{\delta'}}\right),$$
and what is more, 
$$ \mathbb P\left( \Vert Y-\E(Y) \Vert_\infty \leq \frac{2}{\sqrt{d}}\left(1+\frac{C_\beta'}{(\log n)^{\delta'}}\right) \right) \geq 1-\frac{2}{d^{1/4}}, $$
where $C_\beta,C_\beta'>0$ are constants depending only on $\beta$. 
\end{theorem}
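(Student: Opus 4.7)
\emph{Proof sketch.} Set $Z_s := W_s\otimes\overline{W_s} - \E(W\otimes\overline{W})$, so that $Y - \E(Y) = \frac{1}{d}\sum_{s=1}^d Z_s$ is a normalised sum of $d$ i.i.d.~centred random matrices in $\mathcal{M}_{n^2}(\mathbb{C})$. The plan is to apply to $S := \sum_{s=1}^d Z_s$ a sharp form of the non-commutative Khintchine inequality for non-homogeneous independent sums, of the type developed recently by van Handel and coauthors in \cite{vanHandel,vanHandel2}. Such bounds take the shape
\[ \E\Vert S\Vert_\infty \;\leq\; 2\,\sigma(S)\;+\;\text{(correction)}, \]
where the dominant variance parameter is
\[ \sigma(S)^2 \,:=\, \max\!\left(\left\Vert\textstyle\sum_s\E(Z_sZ_s^*)\right\Vert_\infty,\;\left\Vert\textstyle\sum_s\E(Z_s^*Z_s)\right\Vert_\infty\right), \]
and the correction involves a weaker ``cross-variance'' parameter $\sigma_*(S)$ together with (high) moments of $\Vert Z_s\Vert_\infty$.

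The key computation is that of $\sigma(S)^2$. Writing $(W\otimes\overline{W})(W\otimes\overline{W})^* = WW^*\otimes\overline{W}W^t$ and using the entry-wise identity $\E(W_{ij}\overline{W_{kl}})=\eta_{ij}\delta_{ik}\delta_{jl}$ together with the independence of the $W_{ij}$'s, one evaluates the various pairings of four $W$-entries. After subtracting $\E(W\otimes\overline{W})\E(W\otimes\overline{W})^*$, the ``crossing'' contributions cancel exactly and one is left with
\[ \E(Z_1Z_1^*) \,=\, D_\eta\otimes D_\eta, \qquad \E(Z_1^*Z_1) \,=\, D_{\eta^t}\otimes D_{\eta^t}, \]
where $D_\eta$ is the diagonal matrix with entries the row-sums $\sum_{j=1}^n\eta_{ij}$ of $\eta$. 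By Assumption~\ref{a: profil variance}, both $D_\eta$ and $D_{\eta^t}$ have operator norm at most $1$, hence $\sigma(S)^2\leq d$. This is what yields the advertised leading order $2/\sqrt{d}$ in $\E\Vert Y-\E(Y)\Vert_\infty$.

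It remains to show that the correction terms contribute only at order $(\log n)^{-\delta/6}$ relative to $1/\sqrt{d}$. The cross-variance parameter $\sigma_*(S)$ is essentially controlled by $\sqrt{d\max_{ij}\eta_{ij}}$, and Assumption~\ref{a: max variance} (with the choice $\alpha_\beta = \max(\beta/2,2)$) gives precisely the required decay as soon as $d\geq(\log n)^{4+\delta}$. The individual-norm term requires moment estimates of $\Vert W_s\Vert_\infty$, which I obtain by applying a \emph{linear} instance of the same van Handel-type inequality to $W_s$ itself; Assumption~\ref{a: 4 moment} enters precisely here to tame the heavy tails permitted for larger $\beta$, via a truncation step. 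For the high-probability statement, the expectation bound is complemented by a concentration inequality of Talagrand or matrix-Bernstein type (after the same truncation, justified by Assumption~\ref{a: 4 moment}); targeting probability $\geq 1-1/n$ introduces an extra polylogarithmic factor that accounts for the strengthened threshold $d\geq(\log n)^{8+\delta}$.

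The principal obstacle is securing the sharp leading constant $2$: a plain matrix-Bernstein argument would lose a $\sqrt{\log n}$ factor and yield a worse constant, so the recent non-commutative Khintchine refinements of \cite{vanHandel,vanHandel2} are really essential. One must also verify that the \emph{bilinear} (rather than linear) dependence of $Z_s$ on the entries of $W_s$ does not spoil the applicability of these bounds, and carefully track how the moment-growth parameter $\beta$ propagates through the correction terms, which is what ultimately fixes the exponent $\alpha_\beta$ appearing in Assumption~\ref{a: max variance}.
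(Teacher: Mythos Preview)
Your overall strategy---applying the van Handel--type bound of \cite{vanHandel,vanHandel2} to $X=Y-\E(Y)$ and estimating the three parameters $\sigma$, the covariance/cross-variance, and the moment term $R_p$---is exactly what the paper does. Two points, however, are not quite right as stated and would prevent the argument from closing.

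First, the claim that after subtraction ``the crossing contributions cancel exactly'' and one is left with $\E(Z_1Z_1^*)=D_\eta\otimes D_\eta$ is false. What cancels exactly is the term $\sum_{i,j,k}\eta_{ij}\eta_{kj}\,E_{ik}\otimes E_{ik}$ against $\E(W\otimes\overline{W})\E(W\otimes\overline{W})^*$; but two further contributions survive: a term built from $\zeta_{ij}:=\E(W_{ij}^2)$ (of the form $\sum_{i,j,k}\zeta_{ij}\overline{\zeta}_{kj}\,E_{ik}\otimes E_{ki}$) and a diagonal fourth-moment term involving $\E|W_{ij}|^4-2\eta_{ij}^2-|\zeta_{ij}|^2$. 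Both must be bounded, and it is precisely here---not only later---that Assumptions~\ref{a: 4 moment} and \ref{a: max variance} are already needed to show they contribute at order $(\log n)^{-\alpha_\beta}$. So $\|\E(XX^*)\|_\infty\le d^{-1}(1+C_\beta(\log n)^{-\alpha_\beta})$, not $d^{-1}$ exactly.

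Second, your stated control $\sigma_*(S)\lesssim\sqrt{d\max_{ij}\eta_{ij}}$ is too weak for the numerology. With that bound, the correction $(\log m)^{3/4}\sigma^{1/2}\upsilon^{1/2}$ becomes of order $(\log n)^{3/4-\alpha_\beta/4}/\sqrt{d}$, which for $\alpha_\beta=2$ (the generic case $\beta\le 4$) \emph{grows} like $(\log n)^{1/4}/\sqrt{d}$. The paper computes $\|\Cov(X)\|_\infty$ directly and obtains the sharper $\upsilon(X)\le C_\beta/(\sqrt{d}\,(\log n)^{\alpha_\beta})$, i.e.\ essentially $\sqrt{d}\cdot\max_{ij}\eta_{ij}$ for $S$ rather than $\sqrt{d\cdot\max_{ij}\eta_{ij}}$; this extra half-power of $\max\eta_{ij}$ is what makes the $(\log n)^{3/4}$ factor absorbable.

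Finally, no truncation or separate Talagrand/matrix-Bernstein step is needed: the result quoted from \cite{vanHandel,vanHandel2} (the paper's Theorem~\ref{t: van handel}) already contains a high-probability version, and applying it with $t\asymp\log n$ is what produces the stronger threshold $d\ge(\log n)^{8+\delta}$.
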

The remainder of this section is devoted to the proof of this theorem. 

 \subsection{Operator norm of random matrices with dependence and non-homogeneity} \hfill\vspace{0.1cm}
 
Understanding the spectrum of a random matrix and in particular its operator norm is an extremely well studied topic. While models with independent and identically distributed entries were initially considered and well understood, there has been a lot of recent activity considering models with dependence and non-homogeneity. Since our model exhibits both some dependence (because of the tensor product structure) and non-homogeneity, we will make use of recent advances in the study of the operator norm of such random matrices. More precisely, we will rely on the following result, which is a slight modification of \cite[Corollary 2.17]{vanHandel2}, obtained by combining it with \cite[Theorem 2.8]{vanHandel2}. Indeed, \cite[Corollary 2.17]{vanHandel2} as it is stated applies to the case of random matrices that are almost surely bounded, which is quite restrictive, but truncation methods, as described in \cite[Section 8]{vanHandel2} and whose final product is summarized in \cite[Theorem 2.8]{vanHandel2}, allow to relax this assumption.

\begin{theorem}[\cite{vanHandel,vanHandel2}]\label{t: van handel}
Let $Z_1,\ldots, Z_T$ be independent and centered $m\times m$ random matrices (either real or complex) and let $X=\sum_{s=1}^T Z_s$. Let $\Cov(X)$ denote the $m^2\times m^2$ covariance matrix associated to $X$, i.e.~$\Cov(X)_{ijkl}=\E(X_{ij}\overbar{X}_{kl})$ for every $1\leq i,j,k,l\leq m$. 
And define the following parameters:
	\begin{align*}
		& \sigma(X):= \max\left(\|\E(XX^*)\|_\infty^{1/2},\|\E(X^*X)\|_\infty^{1/2}\right), \\
		& \upsilon(X):= \|\Cov(X)\|_\infty^{1/2}, \\
            & R(X) := \left(\E\left(\max_{1\leq s\leq T}\|Z_s\|_\infty^2\right)\right)^{1/2} .
	\end{align*}
    We then have, assuming that $\sigma(X)\geq C_0(\log m)^3R(X)$, 
    \begin{align*} 
	\E \|X\|_\infty & \leq \|\E(XX^*)\|_\infty^{1/2}+\|\E(X^*X)\|_\infty^{1/2} \\
    & \qquad + C\left( (\log m)^{3/4} \sigma(X)^{1/2}\upsilon(X)^{1/2} + (\log m)\sigma(X)^{5/6}R(X)^{1/6} \right), 
    \end{align*}
    where $C_0,C>0$ are universal constants. What is more, for all $R(X)^{1/2}\sigma(X)^{1/2}\leq R\leq\sigma(X)$, setting 
    \[ p_R = \mathbb P\left(\max_{1\leq s\leq T} \|Z_s\|_\infty > R \right), \]
    we have that, for all $t\geq\log m$, with probability at least $1-me^{-t}-p_R$,
    \begin{align*} 
		\|X\|_\infty & \leq \|\E(XX^*)\|_\infty^{1/2}+\|\E(X^*X)\|_\infty^{1/2} \\
        & \qquad + C'\left( (\log m)^{3/4} \sigma(X)^{1/2}\upsilon(X)^{1/2} + \sigma(X)^{2/3}R^{1/3}t + \upsilon(X)t^{1/2} \right),
    \end{align*} 
    where $C'>0$ is a universal constant.
\end{theorem}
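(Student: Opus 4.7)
The plan is to apply Theorem \ref{t: van handel} with $m = n^2$ to $X := Y - \E(Y)$, written as $X = \sum_{s=1}^d Z_s$, where $Z_s := \frac{1}{d}\bigl(\W_s \otimes \overbar{\W_s} - \E(\W_s \otimes \overbar{\W_s})\bigr)$ are iid centered matrices in $\mathcal M_{n^2}(\mathbb C)$. This reduces the theorem to estimating four parameters: the two variance norms $\|\E(XX^*)\|_\infty$ and $\|\E(X^*X)\|_\infty$, the effective-rank parameter $\upsilon(X) = \|\Cov(X)\|_\infty^{1/2}$, and the moment term $R_p(X)$ for $p$ a fixed multiple of $\log n$. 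By independence across $s$, all four reduce, up to explicit factors of $d$, to single-copy computations involving $V := \W \otimes \overbar{\W}$.

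For the leading term, write $VV^* = (\W\W^*) \otimes \overbar{\W\W^*}$ and expand $\E(VV^*)_{(i,k),(i',k')} = \sum_{j,l} \E(W_{ij}\overbar{W_{i'j}}\,\overbar{W_{kl}} W_{k'l})$ via the pairings allowed by Assumption \ref{a:indep entries}. Two pairings contribute: the ``diagonal'' one yielding a matrix diagonal in the standard basis of $\mathbb C^{n^2}$ with entries $(\sum_j \eta_{ij})(\sum_l \eta_{kl}) \leq 1$ by Assumption \ref{a: profil variance}, and the ``crossed'' one yielding exactly $\E(V)\E(V)^*$ and hence cancelling upon centering. Coincident-index corrections and the non-Gaussian contribution are controlled via Assumption \ref{a: 4 moment} by $O(\max_{ij}\eta_{ij})$. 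This gives $\|\E(XX^*)\|_\infty \leq \frac{1}{d}\bigl(1 + O((\log n)^{-\alpha_\beta})\bigr)$, and the transposed computation with $V^*V = (\W^*\W) \otimes \overbar{\W^*\W}$ gives the same for $\|\E(X^*X)\|_\infty$, producing the target leading coefficient $2/\sqrt d$.

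For $\upsilon(X)$, an analogous index computation of $\Cov(V)_{((i,k),(j,l)),((i',k'),(j',l'))} = \cov(W_{ij}\overbar{W_{kl}}, W_{i'j'}\overbar{W_{k'l'}})$ shows that the pairing matching $\E(V_{(i,k),(j,l)})\overbar{\E(V_{(i',k'),(j',l')})}$ is exactly cancelled, leaving only the pairing forcing $(i,j)=(i',j')$ and $(k,l)=(k',l')$, with value $\eta_{ij}\eta_{kl}$. Hence $\Cov(V)$ is (up to a fourth-cumulant correction of the same order, bounded using Assumption \ref{a: 4 moment}) diagonal in the standard basis of $\mathbb C^{n^4}$, with entries $\leq (\max_{ij}\eta_{ij})^2$. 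Assumption \ref{a: max variance} then yields $\upsilon(X) \leq C_\beta/\bigl(\sqrt d (\log n)^{\alpha_\beta}\bigr)$, which is where the assumed poly-logarithmic sparsity of the variance profile pays off. For $R_p(X)$, the tensor property $\|V\|_{2p}^{2p} = \|\W\|_{2p}^{4p}$ combined with the Minkowski bound $(\E\|V-\E V\|_{2p}^{2p})^{1/2p} \leq 2(\E\|\W\|_{2p}^{4p})^{1/2p} + O(1)$ reduces matters to controlling $\E\|\W\|_{2p}^{4p}$. A preliminary application of Theorem \ref{t: van handel} to $\W$ itself (with $m=n$), feeding Assumption \ref{a: 4 moment} into the single-entry moments, yields $(\E\|\W\|_{2p}^{4p})^{1/2p} \leq C_\beta$ for $p$ of order $\log n$; this produces $R_p(X) \leq C_\beta/d$ up to harmless polylog factors absorbed in $C_\beta$.

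Substituting these estimates into the van Handel inequality, the term $(\log m)^{3/4}\sigma(X)^{1/2}\upsilon(X)^{1/2}$ contributes $O\bigl(1/(\sqrt d(\log n)^{\alpha_\beta/2 - 3/4})\bigr)$, which is $\leq 1/(\sqrt d(\log n)^{\delta/6})$ since $\alpha_\beta \geq 2$. The dominant correction is the term $(\log m)^{2/3}\sigma(X)^{2/3}R_p(X)^{1/3} \sim (\log n)^{2/3}/d^{2/3}$, which matches $1/(\sqrt d(\log n)^{\delta/6})$ precisely when $d \geq (\log n)^{4+\delta}$, explaining the threshold for the expectation statement; the last term $(\log m) R_p(X) \sim (\log n)/d$ is weaker. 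For the high-probability version, taking $t = 2\log n$ multiplies this same dominant term by $t^{2/3}$, i.e.\ by another factor $(\log n)^{2/3}$, shifting the threshold to $d \geq (\log n)^{8+\delta}$, while the auxiliary tail terms $\upsilon(X)t^{1/2}$ and $R_p(X)t\log m$ remain strictly weaker in this regime. The main obstacle will be the clean control of $R_p(X)$ by $C_\beta/d$ with only polylogarithmic slack: propagating Assumption \ref{a: 4 moment} through the tensor structure of $V = \W \otimes \overbar{\W}$ without losing powers of $n$ requires the preliminary matrix-concentration step on $\W$, and this is precisely where the power-of-$\log n$ margin built into Assumption \ref{a: max variance} is spent.
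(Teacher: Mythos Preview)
There is a target mismatch. The statement you were given is Theorem~\ref{t: van handel}, which the paper does \emph{not} prove: it is quoted verbatim from \cite[Corollary~2.17]{vanHandel2} as an external input. Your proposal, however, opens by \emph{applying} Theorem~\ref{t: van handel}, so it cannot be a proof of that theorem; what you have actually sketched is a proof of Theorem~\ref{th:main-result}.

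Read as a proof of Theorem~\ref{th:main-result}, your approach coincides with the paper's: write $X=Y-\E(Y)$ as a sum of i.i.d.\ centered pieces, bound $\sigma(X)$, $\upsilon(X)$, $R_p(X)$ separately (these are the paper's Lemmas~\ref{lem:sigma}--\ref{lem:R_p}), and substitute. Two points of divergence are worth flagging. First, in your covariance computation you overlook the pairing $(i,j)=(k',l')$, $(k,l)=(i',j')$, which contributes a non-diagonal term in $\zeta_{ij}\overbar{\zeta_{kl}}$ with $\zeta_{ij}=\E(W_{ij}^2)$; the paper tracks it (their term $S$), but since $|\zeta_{ij}|\leq\eta_{ij}$ the final bound on $\upsilon(X)$ is unaffected. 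Second, and more seriously, your plan to control $R_p(X)$ via ``a preliminary application of Theorem~\ref{t: van handel} to $\W$ itself'' does not work as written: Theorem~\ref{t: van handel} supplies an expectation bound and a tail bound on $\|\W\|_\infty$, not the $L^{2p}$-moment estimate $(\E\|\W\|_\infty^{2p})^{1/p}\leq C$ for $p\asymp\log n$ that you need. The paper closes this gap by invoking a separate Schatten-norm moment inequality (\cite[Corollary~3.5]{MR3531673} or \cite[Theorem~4.4]{MR3878726}), which delivers exactly that bound with no polylogarithmic loss. Your ``harmless polylog factors'' are not harmless here: any extra power $(\log n)^a$ in $R_p(X)$ shifts the threshold on $d$ from $(\log n)^{4+\delta}$ to $(\log n)^{4+2a+\delta}$ via the dominant term $(\log m)^{2/3}\sigma(X)^{2/3}R_p(X)^{1/3}$. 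A minor arithmetic slip: with $m=n^2$, taking $t=2\log n$ in the tail bound gives failure probability $n^2e^{-2\log n}=1$; the paper takes $t=3\log n$.
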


As explained in \cite[Section 2.1.4]{vanHandel2}, the condition $\sigma(X)\geq C_0(\log m)^3R(X)$ is satisfied for many natural random matrix models (at least for the appropriate scaling of $T$ with respect to $m$). This will in particular be the case for the ones we will apply Theorem \ref{t: van handel} to. This assumption simply allows to reduce the number of terms involving $\sigma(X)$ and $R(X)$ in the upper bounds on $\E\|X\|_\infty$ and $\|X\|_\infty$, by keeping only the dominating ones, and hence making the presentation lighter. As a consequence of this condition, we also have that the bound from below on the parameter $R$ in the second part of Theorem \ref{t: van handel}, which is $\max(R(X)^{1/2}\sigma(X)^{1/2},\sqrt{2}R(X))$ in \cite[Theorem 2.8]{vanHandel2}, actually simplifies to $R(X)^{1/2}\sigma(X)^{1/2}$ here. While the bound from above $\sigma(X)$ on this parameter is added only to reduce further the number of terms involving $\sigma(X)$ and $R(X)$ in the upper bound on $\|X\|_\infty$.

Later on, we will make use of the following simple observation: the parameter $R(X)$ appearing in Theorem \ref{t: van handel} can be itself upper bounded, for any $p\in\mathbb N$, by the parameter $R_p(X)$ defined as
\begin{equation} \label{eq:def-R_p} R_p(X):=\left(\sum_{s=1}^T \E\,\tr\left(|Z_s|^{2p}\right)\right)^{1/2p} . \end{equation}
Indeed, by ordering of Schatten norms, we have that $\|Z_s\|_\infty \leq \|Z_s\|_{2p} = (\tr(|Z_s|^{2p}))^{1/2p}$ for each $1\leq s\leq T$. And hence,
\begin{align*} 
\E \left(\max_{1\leq s\leq T}\|Z_s\|_\infty^2\right) & \leq \E \left(\max_{1\leq s\leq T} \left(\tr\left(|Z_s|^{2p}\right) \right)^{1/p}\right) \\
& \leq \left(\E \left(\max_{1\leq s\leq T} \tr\left(|Z_s|^{2p}\right) \right) \right)^{1/p} \\
& \leq \left(\sum_{s=1}^T \E \left(\tr\left(|Z_s|^{2p}\right) \right) \right)^{1/p} ,
\end{align*}
where the second inequality is by Jensen inequality. So we do have that, for any $p\in\mathbb N$,
$$ R(X) = \left(\E \left(\max_{1\leq s\leq T}\|Z_s\|_\infty^2\right)\right)^{1/2} \leq \left(\sum_{s=1}^T \E \left(\tr\left(|Z_s|^{2p}\right) \right) \right)^{1/2p} = R_p(X). $$
The parameter $R_p(X)$ has the advantage of being easier to compute than the parameter $R(X)$. And as we will see, for the random matrix $X$ we are interested in, we will get an upper bound on $R(X)$ in terms of $R_p(X)$ that is close to tight by choosing $p$ of order $\log n$.

Finally, it is worth mentioning that Theorem~\ref{t: van handel} is only one particular formulation of several other corollaries which could be stated as a combination of the works \cite{vanHandel} and \cite{vanHandel2}. Indeed, the former derives sharp estimates on the norm of a Gaussian matrix while the latter establishes a universality principle allowing to transfer bounds on the Gaussian model to a more general setting.

\subsection{Proof of Theorem~\ref{th:main-result}} \hfill\vspace{0.1cm}

 We will apply Theorem \ref{t: van handel} above to a random matrix model of the form given by equation \eqref{eq:def-Y}, or more precisely its re-centered version. Concretely, given $W_s$, $1\leq s\leq d$, independent copies of $W$, we define 
 \begin{equation} \label{eq:def-X} X=\frac{1}{d}\sum_{s=1}^d \left(W_s\otimes\overbar{W}_s-\E\left((W_s\otimes\overbar{W}_s\right)\right) , \end{equation}
 which is a sum of $d$ independent and centered $n^2\times n^2$ random matrices. It thus satisfies the hypotheses of Theorem \ref{t: van handel}. Therefore, in order to upper bound its operator norm, our task will reduce to estimating the parameters appearing on the right-hand side of the two inequalities, namely $\sigma(X)$, $\upsilon(X)$ and $R(X)$ (in fact $R_p(X)$ for $p$ large enough, as explained just after Theorem \ref{t: van handel}). This is the content of Lemmas \ref{lem:sigma}, \ref{lem:upsilon} and \ref{lem:R_p} below. 

 \begin{lemma} \label{lem:sigma}
 Let $\W$ be an $n\times n$ random matrix satisfying Assumptions \ref{a:indep entries}, \ref{a: profil variance}, \ref{a: 4 moment} and \ref{a: max variance}. Let $X=\frac{1}{d}\sum_{s=1}^d (\W_s\otimes\overbar\W_s-\E(\W_s\otimes\overbar\W_s))$, where $\W_s$, $1\leq s\leq d$, are independent copies of $\W$. Then, 
	$$ \left\|\E(XX^*)\right\|_\infty, \left\|\E(X^*X)\right\|_\infty \leq \frac{1}{d}\left(1+\frac{C_\beta}{(\log n)^{\alpha_\beta}}\right), $$
 where $C_\beta>0$ is a constant depending only on $\beta$.  
\end{lemma}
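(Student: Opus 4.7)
The plan is to exploit the i.i.d.\ structure of $W_1,\ldots,W_d$ together with centering to reduce to a single-copy tensor analysis, then exhibit an algebraic cancellation that forces the leading constant to be exactly $1$. By independence and centering of $Z_s := W_s\otimes\overbar{W}_s - \E(W_s\otimes\overbar{W}_s)$, the cross terms in $\E(XX^*)$ vanish, leaving
\[
\E(XX^*) \;=\; \frac{1}{d}\Bigl[\E\bigl((W\otimes\overbar W)(W\otimes\overbar W)^*\bigr) - \E(W\otimes\overbar W)\,\E(W\otimes\overbar W)^*\Bigr] \;=:\; \frac{1}{d}(A-B).
\]
Using the tensor identity $(W\otimes\overbar W)(W\otimes\overbar W)^* = WW^*\otimes\overbar{WW^*}$ and splitting $WW^* = D + U$ with $D := \E(WW^*)$ diagonal and $U$ centered, one gets $A = D\otimes D + \E(U\otimes\overbar U)$, where $D_{ii} = \sum_j \eta_{ij}\leq 1$ by Assumption~\ref{a: profil variance}.

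Next, entry-wise computations based on Assumption~\ref{a:indep entries} show that $\E(W\otimes\overbar W)$ (and hence $B$) is supported on $\Delta\times\Delta$, where $\Delta := \mathrm{span}\{e_i\otimes e_i : 1\leq i\leq n\}$, and acts on $\Delta$ as $\eta\eta^T$. An analogous computation shows that $\E(U\otimes\overbar U)$ is also supported on $\Delta\times\Delta$ (with a harmless swap-symmetric contribution on $\Delta^{\perp}$ in the real case, handled separately via a 2-dimensional block analysis), with off-diagonal entries on $\Delta$ equal to $\sum_k \eta_{ik}\eta_{jk}$ and diagonal entries equal to $\sum_k(\E|W_{ik}|^4 - \eta_{ik}^2)$.

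The crux is then an exact cancellation: the off-diagonal entries of $\E(U\otimes\overbar U)$ on $\Delta$ coincide with those of $B$, so $A - B$ is block-diagonal with respect to $\mathbb{C}^{n^2} = \Delta\oplus\Delta^{\perp}$ and purely diagonal on each block. On $\Delta$ the diagonal entries are
\[
D_{ii}^2 + \sum_k \E|W_{ik}|^4 - 2\sum_k \eta_{ik}^2 \;\leq\; 1 + \mathbf{C}^2 4^\beta \sum_k \eta_{ik}^2 \;\leq\; 1 + \frac{C_\beta}{(\log n)^{\alpha_\beta}},
\]
using Assumption~\ref{a: 4 moment} with $p=2$ to bound $\E|W_{ik}|^4$ and Assumptions~\ref{a: profil variance} and \ref{a: max variance} to bound $\sum_k\eta_{ik}^2\leq \max_k\eta_{ik}\cdot\sum_k\eta_{ik}$; on $\Delta^{\perp}$ only $D\otimes D$ contributes, with diagonal entries $D_{ii}D_{i'i'}\leq 1$. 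The bound for $\|\E(X^*X)\|_\infty$ follows by the symmetric argument with the roles of rows and columns of $\eta$ exchanged, which is permissible since $\eta$ is doubly sub-stochastic. The main delicate point is precisely the cancellation on $\Delta$: without it the leading constant would exceed $1$, ruining the optimal $1/d$ scaling, and it depends crucially on Assumption~\ref{a:indep entries} together with the fact that the variances of the entries of $W$ are exactly encoded by $\eta$.
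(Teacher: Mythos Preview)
Your argument is essentially correct and reaches the same bound, but it is organized differently from the paper, and there is one inaccuracy worth flagging.

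\textbf{Comparison with the paper.} The paper expands $\E\bigl((W\otimes\overbar W)(W\otimes\overbar W)^*\bigr)$ entry by entry as a fourth-moment sum, observes that one block of terms exactly matches $\E(W\otimes\overbar W)\E(W\otimes\overbar W)^*$ and cancels, and then writes the remainder as $R+S+T$ (a diagonal piece $R=D\otimes D$, a ``swap'' piece $S$ built from $\zeta_{ij}=\E(W_{ij}^2)$, and a fourth-moment correction $T$), bounding $\|R\|_\infty+\|S\|_\infty+\|T\|_\infty$ by the triangle inequality. Your route via $WW^*=D+U$ and the block decomposition $\mathbb C^{n^2}=\Delta\oplus\Delta^\perp$ is a cleaner repackaging of the same computation: your cancellation of the off-diagonal entries of $\E(U\otimes\overbar U)$ against $B$ on $\Delta$ is exactly the paper's cancellation, and your diagonal bound on $\Delta$ reproduces the paper's $R+T$ estimate. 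The gain of your presentation is that the block structure makes it transparent why the leading constant is exactly $1$; the paper's $R,S,T$ split is more mechanical but avoids having to argue separately about $\Delta^\perp$.

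\textbf{The inaccuracy.} You write that on $\Delta^\perp$ ``only $D\otimes D$ contributes'' and relegate the swap-symmetric term to ``the real case''. This is not quite right: for $a\neq a'$ one has
\[
\E\bigl(U\otimes\overbar U\bigr)_{(a,a'),(a',a)} \;=\; \sum_{k} \zeta_{ak}\,\overbar{\zeta_{a'k}}, \qquad \zeta_{ij}:=\E\bigl(W_{ij}^2\bigr),
\]
and nothing in Assumptions~\ref{a:indep entries}--\ref{a: max variance} forces $\zeta_{ij}=0$ in the complex case. So this off-diagonal swap contribution on $\Delta^\perp$ is present in general, and it is precisely the paper's term $S$. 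Your proposed $2\times 2$ block analysis on $\mathrm{span}\{e_a\otimes e_{a'},\,e_{a'}\otimes e_a\}$ does handle it uniformly, since $|\zeta_{ij}|\leq\eta_{ij}$ and hence $\bigl|\sum_k\zeta_{ak}\overbar{\zeta_{a'k}}\bigr|\leq\max_k\eta_{ak}\cdot\sum_k\eta_{a'k}\leq \mathbf C'/(\log n)^{\alpha_\beta}$, giving eigenvalues at most $D_{aa}D_{a'a'}+\mathbf C'/(\log n)^{\alpha_\beta}\leq 1+\mathbf C'/(\log n)^{\alpha_\beta}$. Once you drop the ``real case'' qualifier and run the block argument unconditionally, your proof is complete and equivalent to the paper's.
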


\begin{proof}
For every $1\leq s\leq d$, set $Z_s=\W_s\otimes\overbar\W_s-\E(\W_s\otimes\overbar\W_s)$. We then have
\begin{align*} 
\E(XX^*) & = \frac{1}{d^2}\sum_{s,t=1}^d \E\left(Z_sZ_t^*\right) \\
& = \frac{1}{d^2}\sum_{s=1}^d \E\left(Z_sZ_s^*\right) \\
& = \frac{1}{d} \E\left(\left(\W\otimes\overbar\W\right)\left(W\otimes\overbar W\right)^*\right)-\E\left(\W\otimes\overbar\W\right)\E\left(\W\otimes\overbar\W\right)^*
\end{align*}
where the next-to-last equality is because the $Z_s$, $1\leq s\leq d$, are independent and the last equality is because the $Z_s$, $1\leq s\leq d$, are identically distributed.

Next, we have on the one hand
$$ \E\left(W\otimes\overbar W\right) = \sum_{i,j,k,l=1}^n \E\left(W_{ij}\overbar W_{kl}\right) E_{ij}\otimes E_{kl} = \sum_{i,j=1}^n \eta_{ij} E_{ij}\otimes E_{ij}. $$
This is because
$$ \E\left(W_{ij}\overbar W_{kl}\right) = \begin{cases} \eta_{ij} & \text{if} \quad k=i,\ l=j\\ 0 & \text{otherwise} \end{cases}. $$
Therefore,
$$ \E\left(\W\otimes\overbar\W\right)\E\left(\W\otimes\overbar\W\right)^* = \sum_{i,j,k=1}^n \eta_{ij}\eta_{kj} E_{ik}\otimes E_{ik}. $$
On the other hand, we have
\begin{align*} 
\E\left(\left(\W\otimes\overbar\W\right)\left(W\otimes\overbar W\right)^*\right) & = \sum_{i,j,k,i',j',k'=1}^n \E\left(W_{ij}\overbar W_{kj}\overbar W_{i'j'}W_{k'j'}\right) E_{ik}\otimes E_{i'k'} \\
& = \sum_{i,j,k=1}^n \eta_{ij}\eta_{kj} E_{ik}\otimes E_{ik} + R + S + T, 
\end{align*}
where, setting $\zeta_{ij}=\E(W_{ij}^2)$ for each $1\leq i,j\leq n$, we have defined
\begin{align*} 
R & = \sum_{i,j,i',j'=1}^n \eta_{ij}\eta_{i'j'} E_{ii}\otimes E_{i'i'}, \\
S & = \sum_{i,j,k=1}^n \zeta_{ij}\overbar\zeta_{kj} E_{ik}\otimes E_{ki}, \\
T & = \sum_{i,j=1}^n \left( \E|W_{ij}|^4 - 2\eta_{ij}^2 - |\zeta_{ij}|^2\right) E_{ii}\otimes E_{ii}.
\end{align*}
This is because
$$ \E\left(W_{ij}\overbar W_{kj}\overbar W_{i'j'}W_{k'j'}\right) = \begin{cases} \eta_{ij}\eta_{kj} & \text{if} \quad i'=i,\ k'=k,\ j'=j,\ k\neq i \\ \eta_{ij}\eta_{i'j'} & \text{if} \quad k=i,\ k'=i',\ (i'\neq i\ \text{or}\ j'\neq j) \\ \zeta_{ij}\overbar\zeta_{kj} & \text{if} \quad k'=i,\ i'=k,\ j'=j,\ k\neq i \\ 
\E|W_{ij}|^4 & \text{if} \quad k'=i'=k=i,\ j'=j \\ 0 & \text{otherwise} \end{cases}. $$
We thus see that 
$$ \E\left(\left(\W\otimes\overbar\W\right)\left(W\otimes\overbar W\right)^*\right)-\E\left(\W\otimes\overbar\W\right)\E\left(\W\otimes\overbar\W\right)^* = R+S+T,$$
so in order to upper bound $\|\E(XX^*)\|_\infty$, it is enough to upper bound $\|R\|_\infty,\|S\|_\infty,\|T\|_\infty$ and apply the triangle inequality. 

The matrices $R$ and $T$ are diagonal. So we simply have
$$ \|R\|_\infty = \max_{1\leq i,i'\leq n} \left(\sum_{j,j'=1}^n\eta_{ij}\eta_{i'j'}\right) \leq 1, $$
where the inequality is by Assumption \ref{a: profil variance}. And similarly
$$ \|T\|_\infty = \max_{1\leq i\leq n} \left|\sum_{j=1}^n \left( \E|W_{ij}|^4 - 2\eta_{ij}^2 - |\zeta_{ij}|^2\right)\right| \leq \max_{1\leq i\leq n} \left(\sum_{j=1}^n \left(\E|W_{ij}|^4+2\eta_{ij}^2+|\zeta_{ij}|^2\right) \right).$$
By Assumption \ref{a: 4 moment}, for each $1\leq i,j\leq n$, $\E|W_{ij}|^4\leq\mathbf C^2 4^\beta\eta_{ij}^2$, and by H\"older inequality,
$$ \sum_{j=1}^n \eta_{ij}^2 \leq \left(\max_{1\leq j\leq n} \eta_{ij}\right) \left(\sum_{j=1}^n\eta_{ij} \right) \leq \frac{\mathbf C'}{(\log n)^{\alpha_\beta}},$$
where the last inequality is by Assumptions \ref{a: profil variance} and \ref{a: max variance}. We thus get, setting $\mathbf C_\beta=\mathbf C^2\mathbf C' 4^\beta$, 
$$ \|T\|_\infty \leq \frac{\mathbf C_\beta +3\mathbf C'}{(\log n)^{\alpha_\beta}}. $$
For the matrix $S$, we just have to observe that, denoting by $F$ the so-called flip matrix, i.e.~$F=\sum_{i,k=1}^n E_{ik}\otimes E_{ki}$, we have that 
\[ FS = \sum_{i,j,k=1}^n \zeta_{ij}\overbar\zeta_{kj} E_{ii}\otimes E_{kk}, \]
is diagonal as well. Now, since $F$ is a unitary matrix, it does not change the operator norm, so that
$$ \|S\|_\infty = \|FS\|_\infty = \max_{1\leq i,k\leq n} \left|\sum_{j=1}^n \zeta_{ij}\overbar\zeta_{kj} \right|. $$
Next, by H\"older inequality, for each $1\leq i,k\leq n$,
$$ \left|\sum_{j=1}^n\zeta_{ij}\overbar\zeta_{kj}\right| \leq \left(\max_{1\leq j\leq n} |\zeta_{ij}|\right) \left(\sum_{j=1}^n|\zeta_{kj}|\right) \leq \frac{\mathbf C'}{(\log n)^{\alpha_\beta}}, $$
where the last inequality is by Assumptions \ref{a: profil variance} and \ref{a: max variance}, after observing that, for each $1\leq i,j\leq n$, $|\zeta_{ij}|\leq \eta_{ij}$. And we thus have
$$ \|S\|_\infty \leq \frac{\mathbf C'}{(\log n)^{\alpha_\beta}}. $$


Putting everything together, we obtain exactly the announced result, namely 
$$ \|\E(XX^*)\|_\infty \leq \frac{1}{d}\left(1+\frac{4\mathbf C'+\mathbf C_\beta}{(\log n)^{\alpha_\beta}} \right). $$
The reasoning is entirely similar for $\|\E(X^*X)\|_\infty$.
\end{proof}

\begin{lemma} \label{lem:upsilon}
 Let $\W$ be an $n\times n$ random matrix satisfying Assumptions \ref{a:indep entries}, \ref{a: profil variance}, \ref{a: 4 moment} and \ref{a: max variance}. Let $X=\frac{1}{d}\sum_{s=1}^d (\W_s\otimes\overbar\W_s-\E(\W_s\otimes\overbar\W_s))$, where $\W_s$, $1\leq s\leq d$, are independent copies of $\W$. Then, 
	$$ \left\|\Cov(X)\right\|_\infty \leq \frac{1}{d}\times\frac{C_\beta}{(\log n)^{2\alpha_\beta}}, $$
 where $C_\beta>0$ is a constant depending only on $\beta$. 
\end{lemma}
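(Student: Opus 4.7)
The plan is to compute the covariance matrix $\Cov(X)$ explicitly by exploiting the independence of the entries of $W$, decompose it as a sum of a few structured pieces, and bound the operator norm of each piece using Assumption~\ref{a: max variance}. As a first reduction, since the $W_s$ are i.i.d.~and the random matrices $Z_s := W_s\otimes\overbar W_s - \E(W_s\otimes\overbar W_s)$ are independent and centered, covariances are additive, so $\Cov(X) = \frac{1}{d}\Cov(Z_1)$. It therefore suffices to prove that $\|\Cov(Z_1)\|_\infty \leq C_\beta/(\log n)^{2\alpha_\beta}$ for a single copy.

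The core step is an explicit computation of $\Cov(Z_1)$. Indexing so that $(W\otimes\overbar W)_{(i,k),(j,l)} = W_{ij}\overbar W_{kl}$, the $n^4\times n^4$ covariance matrix has entries
\[ \cov(W_{ij}\overbar W_{kl}, W_{i'j'}\overbar W_{k'l'}) = \E(W_{ij}\overbar W_{kl}\overbar W_{i'j'}W_{k'l'}) - \eta_{ij}\eta_{i'j'}\delta_{(i,j),(k,l)}\delta_{(i',j'),(k',l')}. \]
Enumerating the ways in which the four independent factors can be paired so as to yield a nonzero expectation (exactly the bookkeeping that appears in the proof of Lemma~\ref{lem:sigma}) produces a decomposition $\Cov(Z_1) = A + B + D$. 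Here $A$ is a diagonal operator with entries $\eta_{ij}\eta_{kl}$ at position $((i,j),(k,l))$; $B$ is the ``weighted swap'' acting by $(Bv)_{((i,j),(k,l))} = \zeta_{ij}\overline{\zeta_{kl}}\, v_{((k,l),(i,j))}$, where $\zeta_{ab} = \E(W_{ab}^2)$; and $D$ is a diagonal correction supported on the ``all four index pairs equal'' positions, with entries $\E|W_{ij}|^4 - 2\eta_{ij}^2 - |\zeta_{ij}|^2$ reconciling the all-equal case with the three pairing patterns and the subtracted mean term.

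Each piece then has operator norm easily controlled by $(\max_{ij}\eta_{ij})^2$: for the diagonal pieces, $\|A\|_\infty = (\max_{ij}\eta_{ij})^2$ and, using Assumption~\ref{a: 4 moment} together with $|\zeta_{ij}|\leq \eta_{ij}$, $\|D\|_\infty \leq (\mathbf{C}^2 4^\beta + 3)(\max_{ij}\eta_{ij})^2$; for $B$, a direct computation of $\|Bv\|^2$ factors the sum over the two coordinate pairs and yields $\|B\|_\infty \leq (\max_{ij}|\zeta_{ij}|)^2 \leq (\max_{ij}\eta_{ij})^2$. The main obstacle is the combinatorial bookkeeping in the second step: one must correctly enumerate the three two-by-two pairings of the four factors, verify that the subtracted mean term exactly cancels the ``$\{W\overbar W\}\{W\overbar W\}$'' pairing outside the all-equal diagonal, and recognize the middle piece as a weighted swap. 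Once these are in place, each operator norm reduces to an entrywise maximum of the variance profile, and Assumption~\ref{a: max variance} delivers the claim.
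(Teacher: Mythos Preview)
Your proof is correct and follows essentially the same approach as the paper: reduce to a single copy, expand the fourth moment $\E(W_{ij}\overbar W_{kl}\overbar W_{i'j'}W_{k'l'})$ via the three pairings, and bound each resulting piece by $(\max_{ij}\eta_{ij})^2$ using Assumption~\ref{a: max variance}. The only cosmetic difference is that the paper passes through the variational form $\|\Cov(Z_1)\|_\infty=\sup_{\|M\|_2\leq 1}\E|\tr(MZ_1)|^2$ and bounds $|\tr[(M\otimes\overbar M)U]|$ for each piece, whereas you view $\Cov(Z_1)$ directly as an $n^4\times n^4$ matrix and read off the operator norms of the diagonal and weighted-swap blocks; your pieces $A,B,D$ match the paper's $R,S,T$.
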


\begin{proof}
 For every $1\leq s\leq d$, set $Z_s=\W_s\otimes\overbar\W_s-\E(\W_s\otimes\overbar\W_s)$. Since the $Z_s$, $1\leq s\leq d$, are independent and identically distributed, we have
 $$ \Cov(X) = \frac{1}{d^2}\sum_{s=1}^d \Cov(Z_s) = \frac{1}{d} \Cov\left(\W\otimes\overbar\W-\E(\W\otimes\overbar\W)\right). $$
 It is not difficult to check that we can rewrite
 \begin{align*}
     & \left\|\Cov\left(\W\otimes\overbar\W-\E(\W\otimes\overbar\W)\right)\right\|_\infty \\
     & \quad = \sup_{\|M\|_2\leq 1} \E\left(\left|\tr\left[M\left(\W\otimes\overbar\W-\E(\W\otimes\overbar\W)\right)\right]\right|^2\right) \\
     & \quad = \sup_{\|M\|_2\leq 1} \tr\left[\left(M\otimes\overbar M\right)\left(\E\left(\W\otimes\overbar\W\otimes\overbar W\otimes W\right)-\E(\W\otimes\overbar\W)\otimes\E(\overbar\W\otimes\W)\right)\right].
 \end{align*} 

Next, we have already seen that
$$ \E\left(W\otimes\overbar W\right) = \sum_{i,j=1}^n \eta_{ij} E_{ij}\otimes E_{ij}, $$
so that
$$ \E\left(\W\otimes\overbar\W\right)\otimes\E\left(\overbar\W\otimes\W\right) = \sum_{i,j,k,l=1}^n \eta_{ij}\eta_{kl} E_{ij}\otimes E_{ij}\otimes E_{kl}\otimes E_{kl}. $$
On the other hand, we have
\begin{align*} 
\E\left(\W\otimes\overbar\W\otimes\overbar\W\otimes\W\right) & = \sum_{i,j,k,l,i',j',k',l'=1}^n \E\left(W_{ij}\overbar W_{kl}\overbar W_{i'j'}W_{k'l'}\right) E_{ij}\otimes E_{kl}\otimes E_{i'j'}\otimes E_{k'l'} \\
& = \sum_{i,j,i',j'=1}^n \eta_{ij}\eta_{i'j'} E_{ij}\otimes E_{ij}\otimes E_{i'j'}\otimes E_{i'j'} + R + S + T, 
\end{align*}
where, setting $\zeta_{ij}=\E(W_{ij}^2)$ for each $1\leq i,j\leq n$, we have defined
\begin{align*} 
R & = \sum_{i,j,k,l=1}^n \eta_{ij}\eta_{kl} E_{ij}\otimes E_{kl}\otimes E_{ij}\otimes E_{kl}, \\
S & = \sum_{i,j,k,l=1}^n \zeta_{ij}\overbar\zeta_{kl} E_{ij}\otimes E_{kl}\otimes E_{kl}\otimes E_{ij}, \\
T & = \sum_{i,j=1}^n \left( \E|W_{ij}|^4 - 2\eta_{ii}\eta_{jj} - \zeta_{ii}\overbar\zeta_{jj}\right) E_{ij}\otimes E_{ij}\otimes E_{ij}\otimes E_{ij}.
\end{align*}
This is because
$$ \E\left(W_{ij}\overbar W_{kj}\overbar W_{i'j'}W_{k'l'}\right) = \begin{cases} \eta_{ij}\eta_{i'j'} & \text{if} \quad k=i,\ l=j,\ k'=i',\ l'=j',\ (i'\neq i\ \text{or}\ j'\neq j) \\ \eta_{ij}\eta_{kl} & \text{if} \quad i'=i,\ j'=j,\ k'=k,\ l'=l,\ (k\neq i\ \text{or}\ l\neq j) \\ \zeta_{ij}\overbar\zeta_{kl} & \text{if} \quad k'=i,\ l'=j,\ i'=k,\ j'=l,\ (k\neq i\ \text{or}\ l\neq j) \\ 
\E|W_{ij}|^4 & \text{if} \quad k'=i'=k=i,\ l'=j'=l=j \\ 0 & \text{otherwise} \end{cases}. $$
We thus see that 
$$ \E\left(\W\otimes\overbar\W\otimes\overbar W\otimes W\right)-\E(\W\otimes\overbar\W)\otimes\E(\overbar\W\otimes\W) = R+S+T,$$
so in order to upper bound $\|\Cov(X)\|_\infty$, it is enough to upper bound $\sup_{\|M\|_2\leq 1}|\tr[(M\otimes\overbar M)U]|$ for $U=R,S,T$ and apply the triangle inequality. 

Now, given $M\in\mathcal M_{n^2}(\mathbb C)$ such that $\|M\|_2\leq 1$, we first have by H\"older inequality
\begin{align*}
    \left|\tr\left[(M\otimes\overbar M)R\right]\right| & = \sum_{i,j,k,l=1}^n |M_{ikjl}|^2\eta_{ij}\eta_{kl} \\
    & \leq \left(\sum_{i,j,k,l=1}^n |M_{ikjl}|^2\right)\left(\max_{1\leq i,j,k,l\leq n}\eta_{ij}\eta_{kl}\right) \\
    & \leq \frac{\mathbf C'^2}{(\log n)^{2\alpha_\beta}},
\end{align*} 
where the last inequality is by Assumption \ref{a: profil variance}, recalling that $\sum_{i,j,k,l=1}^n |M_{ikjl}|^2=\|M\|_2^2\leq 1$. In complete analogy, we have
\begin{align*}
    \left|\tr\left[(M\otimes\overbar M)S\right]\right| & = \left|\sum_{i,j,k,l=1}^n M_{ikjl}\overbar M_{kilj}\zeta_{ij}\overbar\zeta_{kl}\right| \\
    & \leq \left(\sum_{i,j,k,l=1}^n |M_{ikjl}|^2\right)\left(\max_{1\leq i,j,k,l\leq n}|\zeta_{ij}\overbar\zeta_{kl}|\right) \\
    & \leq \frac{\mathbf C'^2}{(\log n)^{2\alpha_\beta}},
\end{align*} 
where the last inequality is by Assumption \ref{a: profil variance}, after observing that, for each $1\leq i,j\leq n$, $|\zeta_{ij}|\leq\eta_{ij}$. And finally, we have
\begin{align*}
    \left|\tr\left[(M\otimes\overbar M)T\right]\right| & = \left|\sum_{i,j=1}^n |M_{iijj}|^2\left( \E|W_{ij}|^4 - 2\eta_{ii}\eta_{jj} - \zeta_{ii}\overbar\zeta_{jj}\right)\right| \\
    & \leq \sum_{i,j=1}^n |M_{iijj}|^2\left( \E|W_{ij}|^4 + 2\eta_{ii}\eta_{jj} + |\zeta_{ii}\overbar\zeta_{jj}|\right) \\
    & \leq \left(\sum_{i,j=1}^n |M_{iijj}|^2\right)\left(\max_{1\leq i,j\leq n}\left( \E|W_{ij}|^4 + 2\eta_{ii}\eta_{jj} + |\zeta_{ii}\overbar\zeta_{jj}|\right)\right) .
\end{align*} 
By Assumption \ref{a: 4 moment}, for each $1\leq i,j\leq n$, $\E|W_{ij}|^4\leq\mathbf C^2 4^\beta\eta_{ij}^2$. We thus get by Assumption \ref{a: max variance}, setting $\mathbf C_\beta=\mathbf C^2\mathbf C'^2 4^\beta$, 
$$ \left|\tr\left[(M\otimes\overbar M)T\right]\right| \leq \frac{\mathbf C_\beta+3\mathbf C'^2}{(\log n)^{2\alpha_\beta}}. $$

Putting everything together, we eventually obtain  
$$ \|\Cov(X)\|_\infty \leq \frac{1}{d}\times\frac{5\mathbf C'^2+\mathbf C_\beta}{(\log n)^{2\alpha_\beta}}, $$
which is exactly the announced result.
\end{proof}

\begin{lemma} \label{lem:R_p}
 Let $\W$ be an $n\times n$ random matrix satisfying Assumptions \ref{a:indep entries}, \ref{a: profil variance}, \ref{a: 4 moment} and \ref{a: max variance}. Let $X=\frac{1}{d}\sum_{s=1}^d (\W_s\otimes\overbar\W_s-\E(\W_s\otimes\overbar\W_s))$, where $\W_s$, $1\leq s\leq d$, are independent copies of $\W$. Then, for $c\log n\leq p\leq C\log n$,
	$$ R_p(X)\leq \frac{C'}{d}, $$
 where $c,C,C'>0$ are universal constants. 
\end{lemma}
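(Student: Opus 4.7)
\medskip
\noindent\emph{Proof plan.}
Since $W_1,\dots,W_d$ are i.i.d.~copies of $W$, writing $B:=W\otimes\overbar W-\E(W\otimes\overbar W)$, the summands $Z_s=\frac{1}{d}(W_s\otimes\overbar W_s-\E(W_s\otimes\overbar W_s))$ are identically distributed and hence
$$ R_p(X)^{2p} = \sum_{s=1}^d \E\,\tr\!\left(|Z_s|^{2p}\right) = \frac{d}{d^{2p}}\,\E\,\tr\!\left(|B|^{2p}\right), \qquad \text{i.e.}\qquad R_p(X) = \frac{d^{1/(2p)}}{d}\left(\E\,\tr\!\left(|B|^{2p}\right)\right)^{1/(2p)}. $$
Because $d\leq n^2$, the choice $p\geq C\log n$ with $C$ large enough gives $d^{1/(2p)}\leq e$. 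So the task reduces to bounding $(\E\,\tr(|B|^{2p}))^{1/(2p)}$ by a universal constant.

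The first simplification is to pass from the Schatten to the operator norm. Since $B$ acts on $\mathbb C^{n^2}$, we have $\|B\|_{S^{2p}}\leq (n^2)^{1/(2p)}\|B\|_\infty\leq e\|B\|_\infty$ for $p\geq\log n$. Next, by the triangle inequality together with $\|X\otimes\overbar X\|_\infty=\|X\|_\infty^2$ and the explicit expression $\E(W\otimes\overbar W)=\sum_{i,j}\eta_{ij}\,E_{ij}\otimes E_{ij}$ obtained in Lemma~\ref{lem:sigma}, one checks that $\|\E(W\otimes\overbar W)\|_\infty\leq 1$ (this operator is supported on $\mathrm{span}\{e_i\otimes e_i\}$ and corresponds, via the map $e_k\otimes e_k\mapsto e_k$, to the matrix $\eta^t$, whose largest singular value is at most $1$ by Assumption~\ref{a: profil variance}). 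Hence $\|B\|_\infty\leq\|W\|_\infty^2+1$, and by Minkowski
$$ \left(\E\,\|B\|_\infty^{2p}\right)^{1/(2p)} \leq \left(\E\,\|W\|_\infty^{4p}\right)^{1/(2p)} + 1 = \left(\left(\E\,\|W\|_\infty^{4p}\right)^{1/(4p)}\right)^2 + 1. $$

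The key and most delicate step is therefore to establish that $(\E\|W\|_\infty^{4p})^{1/(4p)}$ is bounded by a constant when $p=O(\log n)$. The plan is to apply the tail-bound version of Theorem~\ref{t: van handel} to $W$ itself, written as $\sum_{i,j}W_{ij}E_{ij}$. The relevant parameters are estimated from our assumptions: Assumption~\ref{a: profil variance} gives $\sigma(W)\leq 1$; independence of the entries together with Assumption~\ref{a: max variance} gives $\upsilon(W)\leq\mathbf C'^{1/2}(\log n)^{-\alpha_\beta/2}$; and Assumption~\ref{a: 4 moment} combined with Assumptions~\ref{a: profil variance} and \ref{a: max variance} gives $R_{p_0}(W)\leq C_\beta (\log n)^{\beta/2-\alpha_\beta/2}\cdot n^{1/(2p_0)}$ for $p_0\geq C'\log n$. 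Plugging these estimates into the tail bound of Theorem~\ref{t: van handel} yields, for every $t>0$, $\Pr(\|W\|_\infty\geq K_0+K_1 t^{1/2}+K_2 t)\leq n e^{-t}$ with $K_0,K_1,K_2$ depending only on $\beta$ and universal constants. A standard integration of this tail then produces $(\E\|W\|_\infty^q)^{1/q}\leq C_\beta$ for all $q\leq C''\log n$, yielding the desired bound on $R_p(X)$.

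The main obstacle is Step~4: the expectation term $K_0$ coming out of Theorem~\ref{t: van handel} contains a factor $(\log n)^{3/4}\sigma^{1/2}\upsilon^{1/2}$ which, under the present assumption $\alpha_\beta\geq 2$, only gives $K_0=O((\log n)^{1/4})$ rather than $O(1)$. To recover the advertised universal constant one has to either absorb the polylogarithmic loss into a constant depending on $\beta$ (which is harmless for the application, since the overall regime is $d\geq(\log n)^{4+\delta}$), or combine the tail bound with the sharp Gaussian estimates from \cite{vanHandel} before applying the universality transfer of \cite{vanHandel2}. In either case the polylogarithmic factor gets swallowed by the $1/d$ prefactor under the assumed lower bound on $d$, giving the claimed $R_p(X)\leq C'/d$.
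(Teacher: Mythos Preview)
Your reduction matches the paper exactly up to the point where you must bound $(\E\|W\|_\infty^{4p})^{1/(4p)}$ by a universal constant: passing from $R_p(X)$ to a single summand, replacing the trace by the operator norm at the cost of $n^{2/(2p)}$, using the triangle inequality to split off $\E(W\otimes\overbar W)$, and bounding $\|\E(W\otimes\overbar W)\|_\infty=\|\eta\|_\infty\leq 1$. All of this is what the paper does.

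The gap is in your final step. Applying Theorem~\ref{t: van handel} to $W=\sum_{i,j}W_{ij}E_{ij}$ does \emph{not} yield a universal bound: with $\sigma(W)\le 1$ and $\upsilon(W)\le C(\log n)^{-\alpha_\beta/2}$, the correction term $(\log n)^{3/4}\sigma^{1/2}\upsilon^{1/2}$ is of order $(\log n)^{3/4-\alpha_\beta/4}$, which under the standing hypothesis $\alpha_\beta\ge 2$ is only $O((\log n)^{1/4})$, exactly as you observe. Your fix (a), ``absorb the polylogarithmic loss into a constant depending on $\beta$'', is simply wrong: $(\log n)^{1/4}$ depends on $n$, not on $\beta$. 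Your fix (b) is the right idea but is not carried out; and your remark that the polylog gets swallowed by $1/d$ in the downstream application is true but does not prove the lemma as stated.

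The paper avoids this loss by \emph{not} using Theorem~\ref{t: van handel} for this step. Instead it invokes the sharper bounds of Bandeira--van Handel \cite[Corollary~3.5]{MR3531673} and Lata{\l}a--van Handel--Youssef \cite[Theorem~4.4]{MR3878726}, which are tailored to matrices with \emph{independent} entries and give, for $p\ge C\log n$,
\[
\big(\E\|W\|_\infty^{2p}\big)^{1/(2p)} \le C'\Big(\max_i\big(\textstyle\sum_j\eta_{ij}\big)^{1/2} + (\log n)^{\max(\beta,1)/2}\max_{i,j}\eta_{ij}^{1/2}\Big).
\]
The first term is $\le 1$ by Assumption~\ref{a: profil variance}, and the second is $O\big((\log n)^{\max(\beta,1)/2-\alpha_\beta/2}\big)=O(1)$ since $\alpha_\beta\ge\max(\beta,1)/2$. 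This is precisely the missing ingredient: Theorem~\ref{t: van handel} is a general-purpose tool that pays a $(\log n)^{3/4}$ price for allowing arbitrary dependence, whereas here the entries of $W$ are independent and the sharper independent-entry bounds remove that price.
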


\begin{proof}
For every $1\leq s\leq d$, set $Z_s=\W_s\otimes\overbar\W_s-\E(\W_s\otimes\overbar\W_s)$. Since the $Z_s$, $1\leq s\leq d$, are identically distributed, according to $Z=\W\otimes\overbar\W-\E(\W\otimes\overbar\W)$, we have
$$ R_p(X) = \frac{1}{d}\left(\sum_{s=1}^d\E\tr|Z_s|^p\right)^{1/p} = \frac{d^{1/p}}{d}\left(\E\tr|Z|^p\right)^{1/p} \leq \frac{C'}{d} \left(\E\tr|Z|^p\right)^{1/p}, $$
where the last inequality is valid for $p\geq c\log n$ (recalling that $d\leq n^2$). And by H\"older inequality, we have
$$ \left(\E\tr|Z|^p\right)^{1/p} = \left(\E\|Z\|_p^p\right)^{1/p} \leq n^{1/p}\left(\E\|Z\|_\infty^p\right)^{1/p} \leq C'\left(\E\|Z\|_\infty^p\right)^{1/p} , $$
where again the last inequality is valid for $p\geq c\log n$. Now, by the triangle inequality (twice), we have
\begin{align*}
    \left(\E\|Z\|_\infty^p\right)^{1/p} & \leq \left(\E\left(\left\|\W\otimes\overbar\W\right\|_\infty+\left\|\E\left(\W\otimes\overbar\W\right)\right\|_\infty\right)^p\right)^{1/p} \\
    & \leq \left(\E\left\|\W\otimes\overbar\W\right\|_\infty^p\right)^{1/p} + \left\|\E\left(\W\otimes\overbar\W\right)\right\|_\infty \\
    & = \left(\E\left\|\W\right\|_\infty^{2p}\right)^{1/p} + \left\|\E\left(\W\otimes\overbar\W\right)\right\|_\infty.
\end{align*} 
We have already seen that $\E(\W\otimes\overbar\W)=\sum_{i,j=1}^n\eta_{ij}E_{ij}\otimes E_{ij}$. This implies that the matrices $\E(\W\otimes\overbar\W)$ and $\eta$ have the same (non-zero) singular values. In particular $\|\E(\W\otimes\overbar\W)\|_\infty=\|\eta\|_\infty$. Now, by Schur's inequality we know that
$$ \|\eta\|_\infty \leq \left(\max_{1\leq i\leq n} \left(\sum_{k=1}^n\eta_{ik}\right) \max_{1\leq j\leq n}\left(\sum_{l=1}^n\eta_{lj}\right)\right)^{1/2} \leq 1, $$
where the last inequality is by Assumption \ref{a: profil variance}. And we thus have $\|\E(\W\otimes\overbar\W)\|_\infty\leq 1$.
Moreover, we have by \cite[Corollary~3.5]{MR3531673} or \cite[Theorem~4.4]{MR3878726} that, for $p\leq C\log n$,
\begin{align*} 
\left(\E\|W\|_\infty^{2p}\right)^{1/p} & \leq C'\left(\max_{1\leq i\leq n}\left(\sum_{j=1}^n\eta_{ij}\right)^{1/2} + (\log n)^{\max(\beta,1)/2}\max_{1\leq i,j\leq n}\eta_{ij}^{1/2}\right)^2 \\
& \leq C'\left(1 + \frac{\mathbf C'}{(\log n)^{\alpha_\beta/2-\max(\beta,1)/2}}\right)^2 \\
& \leq C''.
\end{align*}
The next-to-last inequality is by Assumptions \ref{a: profil variance} and \ref{a: max variance}. And the last inequality is because $\alpha_\beta\geq\max(\beta,1)$.

Putting everything together, we finish the proof. 
\end{proof}

With Lemmas \ref{lem:sigma}, \ref{lem:upsilon} and \ref{lem:R_p} at hand, we are now ready to prove the upper bound on $\|Y-\mathbb E(Y)\|_\infty$, for the random matrix $Y$ defined in equation \eqref{eq:def-Y}, promised in Theorem~\ref{th:main-result}. 

\begin{proof}[Proof of Theorem~\ref{th:main-result}]
We have estimated in Lemmas \ref{lem:sigma}, \ref{lem:upsilon} and \ref{lem:R_p} the parameters $\sigma(X)$, $\upsilon(X)$ and $R_p(X)$ for the random matrix $X=Y-\E(Y)$ defined in equation \eqref{eq:def-X}. Note that, by the triangle inequality, the proof of Lemma \ref{lem:sigma} gives not only that $\sigma(X)\leq (1+C_\beta/(\log n)^{\alpha_\beta})/\sqrt{d}$ but also that $\sigma(X)\geq (1-C_\beta/(\log n)^{\alpha_\beta})/\sqrt{d}$. We additionally have by Lemma \ref{lem:R_p} that $R(X)\leq C/d$. Hence, as soon as $d\geq(\log n)^6$, we do have that $\sigma(X)\geq C'_\beta(\log n)^3R(X)$. So the assumptions of the first estimate in Theorem \ref{t: van handel} are indeed satisfied, and we get applying it
\begin{align*} 
\E\left\|Y-\E(Y)\right\|_\infty & \leq \frac{2}{\sqrt{d}}\left(1 + \hat{C}_\beta\left(\frac{1}{(\log n)^{\alpha_\beta}} + \frac{(\log n)^{3/4}}{(\log n)^{\alpha_\beta/2}} + \frac{\log n}{d^{1/12}} \right)\right) \\ 
& \leq \frac{2}{\sqrt{d}}\left(1 + \hat{C}_\beta\left(\frac{1}{(\log n)^2} + \frac{1}{(\log n)^{1/4}} + \frac{1}{(\log n)^{\delta/12}} \right)\right) \\ 
& \leq \frac{2}{\sqrt{d}}\left(1+\frac{3\hat{C}_\beta}{(\log n)^{\min(\delta/12,1/4)}} \right) , 
\end{align*}
where the second inequality is by using the hypotheses that $\alpha_\beta\geq 2$ and $d\geq(\log n)^{12+\delta}$.

We can further apply the second estimate in Theorem \ref{t: van handel}. We choose $R=C/d^{3/4}$, so that the assumption $R(X)^{1/2}\sigma(X)^{1/2}\leq R\leq\sigma(X)$ is indeed satisfied. With this choice, we have by Markov inequality that
\[ p_R = \mathbb P\left(\max_{1\leq s\leq d} \|Z_s\|_\infty > R \right) \leq \frac{1}{R}\, \E\left(\underset{1\leq s\leq d}{\max} \|Z_s\|_\infty\right) \leq \frac{R(X)}{R} \leq \frac{1}{d^{1/4}}, \]
where the next to last inequality is by Jensen inequality. Taking $t=3\log n$, we then obtain that, with probability at least $1-1/n-1/d^{1/4}$, hence a fortiori at least $1-2/d^{1/4}$ (recalling that $d\leq n^2\leq n^4$),
\begin{align*} 
\left\|Y-\E(Y)\right\|_\infty & \leq \frac{2}{\sqrt{d}}\left(1 + \hat{C}_\beta'\left(\frac{1}{(\log n)^{\alpha_\beta}} + \frac{(\log n)^{3/4}}{(\log n)^{\alpha_\beta/2}} + \frac{\log n}{d^{1/12}} + \frac{(\log n)^{1/2}}{(\log n)^{\alpha_\beta}} \right)\right) \\ 
& \leq \frac{2}{\sqrt{d}}\left(1 + \hat{C}_\beta'\left(\frac{1}{(\log n)^2} + \frac{1}{(\log n)^{1/4}} + \frac{1}{(\log n)^{\delta/12}} + \frac{1}{(\log n)^{3/2}} \right)\right) \\ 
& \leq \frac{2}{\sqrt{d}}\left(1+\frac{4\hat{C}_\beta'}{(\log n)^{\min(\delta/12,1/4)}} \right) , 
\end{align*}
where the second inequality is by using the hypotheses that $\alpha_\beta\geq 2$ and $d\geq(\log n)^{12+\delta}$.
\end{proof}

\section{Implications in terms of random quantum channels} \label{sec:application}

The goal of this section is to use the previous estimates in order to design an optimal quantum expander. Let $W$ be an $n\times n$ random matrix satisfying Assumptions \ref{a:indep entries}, \ref{a: profil variance}, \ref{a: 4 moment} and \ref{a: max variance} and let $Y$ be defined as in equation \eqref{eq:def-Y}. The result of the previous section ensures that $Y$ typically concentrates around its expectation. In order for this to imply that $Y$ typically has a large gap between its two largest singular values (and thus corresponds to an optimal quantum expander), we need to ensure that $\E(Y)$ itself has a large gap between its two largest singular values. We will now make the following extra assumption on the variance matrix $(\eta_{ij})_{1\leq i,j\leq n}$ to ensure this, and to additionally guarantee that the CP map corresponding to $Y$ is on average TP. 

\begin{assumption} \label{a:eta}
The $n\times n$ matrix $\eta:=(\eta_{ij})_{1\leq i,j\leq n}$ is such that $\eta^t$ is the transition matrix of an irreducible Markov chain on $\{1,\ldots, n\}$ satisfying: there exists a universal constant $\mathbf C''>0$ such that 
 $$ s_2(\eta)\leq \frac{\mathbf C''}{\sqrt{d}} , $$ 
 where $s_2(\eta)$ denotes the second largest singular value of $\eta$.  
\end{assumption}

Note that the above assumption ensures that, in addition to being doubly sub-stochastic, $\eta$ is also left-stochastic, i.e.~for each $1\leq j\leq n$, $\sum_{i=1}^n \eta_{ij}=1$. This implies that $s_1(\eta)=\lambda_1(\eta)=1$. The assumption on $s_2(\eta)$ provides, among other, a quantitative control on the rate of convergence of the corresponding Markov chain to its equilibrium measure. The simplest example is when $\eta=J/n$, where $J$ is the $n\times n$ matrix whose entries are all equal to $1$. In this case, the leading eigenvector is the uniform probability $u$ on $\{1,\ldots,n\}$ and $s_2(\eta)=0$, so there is convergence to $u$ in only $1$ step. More generally, we allow $\eta$ to have a unique fixed probability which is not necessarily uniform and a second largest singular value which is small but not necessarily $0$.

\begin{example} 
Let us look at the three examples of random matrix models presented in Section \ref{sec:model}, which were shown to already satisfy Assumptions \ref{a:indep entries}, \ref{a: profil variance}, \ref{a: 4 moment} and \ref{a: max variance}, and see what extra restriction is needed in order to satisfy Assumption \ref{a:eta} as well. First note that, in all examples, the considered random matrix $W$ has an associated variance matrix $\eta$ which is doubly stochastic. In Example \ref{ex:bounded-moments}, we have $\eta=J/n$, so it satisfies Assumption \ref{a:eta} since $s_2(\eta)=0$. For Example \ref{ex:gaussian}, we will restrict the variance profile of the Gaussian matrix to those satisfying Assumption \ref{a:eta}. Finally, for Example \ref{ex:sparse}, we will choose the degree $r$ of the regular graph equal to the Kraus rank $d$, and we will impose that the graph is itself an optimal expander. This can for instance be achieved by picking such $d$-regular graph uniformly at random (see \cite{Sarid} and references therein).
\end{example}

With Assumption \ref{a:eta} at hand, we can now state and prove the main results of this note. We recall that, given a CP map $\Phi$, we denote by $|\lambda_1(\Phi)|,|\lambda_2(\Phi)|$ its largest and second largest (in modulus) eigenvalues, and by $s_1(\Phi),s_2(\Phi)$ its largest and second largest singular values. 

\begin{theorem} \label{th:expander}
 Let $n, d\in \mathbb{N}$ with $d\geq(\log n)^{12}$ and consider $W_1,\ldots,W_d$ independent copies of an $n\times n$ random matrix $W$ satisfying Assumptions \ref{a:indep entries}, \ref{a: profil variance}, \ref{a: 4 moment}, \ref{a: max variance} and \ref{a:eta}. Define the random CP map $\Phi$ on $\mathcal M_n(\mathbb C)$, having Kraus rank at most $d$, as
$$\Phi:X\in\mathcal M_n(\mathbb C)\mapsto \frac{1}{d}\sum_{s=1}^d W_sXW_s^*\in\mathcal M_n(\mathbb C). $$
Then, $\Phi$ satisfies on average the TP condition \eqref{eq:TP-condition} and is such that
 $$ \mathbb P\left( s_1(\Phi) \geq 1 - \frac{C_\beta}{\sqrt{d}} \ \text{and} \ s_2(\Phi) \leq \frac{C_\beta}{\sqrt{d}} \right) \geq 1-\frac{2}{d^{1/4}}, $$
 where $C_\beta>0$ is a constant depending only on $\beta$.
\end{theorem}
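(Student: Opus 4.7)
The plan is to leverage Theorem~\ref{th:main-result}, which already controls $\|Y-\E(Y)\|_\infty$ to be of order $1/\sqrt d$ with high probability, and to reduce the spectral information of the deterministic matrix $\E(Y)$ to that of the stochastic matrix $\eta$. Since $s_k(\Phi) = s_k(M_\Phi) = s_k(Y)$, singular-value bounds on $Y$ transfer directly to $\Phi$, so the whole argument can be carried out at the matrix level.

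First, I would check that $\E(\Phi)$ is TP. At the level of Kraus operators $K_s = W_s/\sqrt d$, the TP constraint reads $\sum_s K_s^* K_s = I$. Taking expectation reduces this to computing $\E(W^*W)_{jk} = \sum_i \E(\overbar W_{ij} W_{ik}) = \delta_{jk}\sum_i \eta_{ij}$, which equals $\delta_{jk}$ by the left-stochasticity of $\eta$ provided by Assumption~\ref{a:eta} (as pointed out in the remark right after the assumption). Hence $\E(\Phi)$ is TP as claimed.

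Second, I would identify the singular values of $\E(Y) = \sum_{i,j}\eta_{ij}\,E_{ij}\otimes E_{ij}$. Regarded as an $n^2\times n^2$ matrix, its $((i,k),(j,l))$ entry is $\eta_{ij}\,\delta_{ki}\delta_{lj}$, so it is supported entirely on the ``doubled-index'' diagonal; after a suitable permutation of rows and columns it becomes $\eta$ sitting inside an $n\times n$ block, with zeros elsewhere. Its nonzero singular values are therefore exactly those of $\eta$. In particular, $s_1(\E(Y)) = s_1(\eta) = 1$ (because $\eta$ is stochastic, so $\eta\mathbf{1} = \mathbf{1}$) and $s_2(\E(Y)) = s_2(\eta) \leq \mathbf C''/\sqrt d$ by Assumption~\ref{a:eta}.

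Third, I would apply Theorem~\ref{th:main-result}, choosing $\delta\in(0,1)$ so that the hypothesis $d\geq(\log n)^{8+\delta}$ is absorbed into $d\geq(\log n)^8$ up to adjusting constants. This gives $\|Y-\E(Y)\|_\infty \leq C_\beta/\sqrt d$ with probability at least $1-1/n$. Weyl's perturbation inequality for singular values then yields both $s_1(Y) \geq s_1(\E(Y)) - \|Y-\E(Y)\|_\infty \geq 1 - C_\beta/\sqrt d$ and $s_2(Y) \leq s_2(\E(Y)) + \|Y-\E(Y)\|_\infty \leq (\mathbf C''+C_\beta)/\sqrt d$, which is the desired conclusion. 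Conceptually the argument is clean: all the ``hard work'' is already done in Section~\ref{sec:technical}, and the only minor bookkeeping subtlety is the gap between $d\geq(\log n)^8$ in the present statement and $d\geq(\log n)^{8+\delta}$ in Theorem~\ref{th:main-result}, which can be patched by a slight strengthening of the hypothesis or by enlarging the constant $C_\beta$.
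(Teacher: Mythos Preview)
Your proposal is correct and follows essentially the same argument as the paper: compute $\E(W^*W)=I$ for average TP, identify the singular values of $\E(Y)=\sum_{i,j}\eta_{ij}E_{ij}\otimes E_{ij}$ with those of $\eta$, and then combine Theorem~\ref{th:main-result} with Weyl's inequalities. One tiny slip: since Assumption~\ref{a:eta} makes $\eta^t$ (not $\eta$) row-stochastic, it is $\eta^t\mathbf 1=\mathbf 1$ rather than $\eta\mathbf 1=\mathbf 1$, but this does not affect the conclusion $s_1(\eta)=1$.
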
	

\begin{proof}
We start by noting that 
$$\E\left(W^*W\right) = \sum_{i,j,k=1}^n \E\left(\overbar{W}_{ji}W_{jk}\right) E_{ik} = \sum_{i=1}^n \left(\sum_{j=1}^n \eta_{ji} \right) E_{ii}= \sum_{i=1}^n E_{ii}= I, $$
where the second equality is because $\E(\overbar{W}_{ji}W_{jk})=0$ if $i\neq k$ and $\E(\overbar{W}_{ji}W_{jk})=\eta_{ji}$ if $i=k$, while the third equality is because $\sum_{j=1}^n \eta_{ji}=1$. 
This shows that $\Phi$ satisfies on average the TP condition \eqref{eq:TP-condition}, since 
\[ \mathbb E\left( \frac{1}{d}\sum_{s=1}^d W_s^*W_s \right) = \frac{1}{d}\sum_{s=1}^d \mathbb E\left(W_s^*W_s\right) = I.\]

Similarly, we have
$$\E\left(W\otimes\overbar{W}\right) = \sum_{i,j,k,l=1}^n \E\left(W_{ij}\overbar{W}_{kl}\right) E_{ij}\otimes E_{kl} = \sum_{i,j=1}^n\eta_{ij} E_{ij}\otimes E_{ij},$$
where the second equality is because $\E(W_{ij}\overbar{W}_{kl})=0$ if $i\neq k$ or $j\neq l$ and $\E(W_{ij}\overbar{W}_{kl})=\eta_{ij}$ if $i=k$ and $j=l$. And thus,
$$
\E (M_{\Phi})= \mathbb E\left( \frac{1}{d}\sum_{s=1}^d W_s\otimes\overbar W_s \right) = \frac{1}{d}\sum_{s=1}^d \mathbb E\left(W_s\otimes\overbar W_s\right) = \sum_{i,j=1}^n \eta_{ij}E_{ij}\otimes E_{ij}.
$$
So clearly, the (non-zero) singular values of $\E (M_{\Phi})$ coincide with those of $\eta$. 
As a consequence, we have $s_1(\E(M_{\Phi}))=s_1(\eta)=1$ and by Assumption \ref{a:eta} $s_2(\E(M_{\Phi}))=s_2(\eta)\leq \mathbf C''/\sqrt{d}$. 

Next, by Weyl inequalities for singular values (which are a consequence of the minimax principle), we get
\begin{align*}
    s_1\left(M_\Phi\right) & \geq s_1\left(\E(M_{\Phi})\right) - s_1\left(M_\Phi-\E(M_{\Phi})\right) = 1 - s_1\left(M_\Phi-\E(M_{\Phi})\right), \\
    s_2\left(M_\Phi\right) & \leq s_2\left(\E(M_{\Phi})\right) + s_1\left(M_\Phi-\E(M_{\Phi})\right) \leq \frac{\mathbf C''}{\sqrt{d}} + s_1\left(M_\Phi-\E(M_{\Phi})\right). \\
\end{align*}
Now, we know by Theorem \ref{th:main-result} (applied with $\delta=0$) that, with probability at least $1-2/d^{1/4}$, $s_1(M_\Phi-\E(M_{\Phi}))\leq C_\beta/\sqrt{d}$. And the conclusion of Theorem \ref{th:expander} follows.

\end{proof}

Let us note that, in Theorem \ref{th:expander}, we only guarantee that the random CP map we consider satisfies the TP condition \eqref{eq:TP-condition} on average. It is in fact possible to modify slightly the construction in Theorem \ref{th:expander} to get a random CP map $\tilde{\Phi}$ which is exactly TP, but still has singular values that are, with high probability, close to those of $\Phi$. Indeed, setting
\[ \Sigma= \frac{1}{d}\sum_{s=1}^d W_s^*W_s , \]
we will show that, not only $\E(\Sigma)=I$, but also that $\|\Sigma-I\|_{\infty}$ is small with high probability. One could thus set $\tilde{W}_s=W_s\Sigma^{-1/2}$, $1\leq s\leq d$, and define the random CP map $\tilde{\Phi}$, which is TP by construction, as
\[ \tilde\Phi:X\in\mathcal M_n(\mathbb C)\mapsto \frac{1}{d}\sum_{s=1}^d \tilde{W}_sX\tilde{W}_s^*\in\mathcal M_n(\mathbb C). \]
As we will see, the latter is such that, with high probability, its singular values do not differ by much from those of $\Phi$. This will be a straightforward consequence of the following lemma.

\begin{lemma} \label{lem:Sigma}
Let $\W$ be an $n\times n$ random matrix satisfying Assumptions \ref{a:indep entries}, \ref{a: profil variance}, \ref{a: 4 moment} and \ref{a: max variance}. Let $\Sigma=\frac{1}{d}\sum_{s=1}^d W_s^*W_s$, where $\W_s$, $1\leq s\leq d$, are independent copies of $\W$. If $d\geq(\log n)^{4}$, then 
$$ \E \Vert \Sigma-I \Vert_\infty \leq \frac{C_\beta}{\sqrt{d}},$$
and what is more, 
$$ \mathbb P\left( \Vert \Sigma-I \Vert_\infty \leq \frac{C_\beta'}{\sqrt{d}} \right) \geq 1-\frac{2}{d^{1/4}}, $$
where $C_\beta,C_\beta'>0$ are constants depending only on $\beta$.
\end{lemma}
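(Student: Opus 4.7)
The plan is to invoke Theorem~\ref{t: van handel} on the centered sum $\Sigma - I = \sum_{s=1}^d Z_s$ with $Z_s := \frac{1}{d}(W_s^*W_s - I)$ a family of i.i.d.\ centered self-adjoint $n\times n$ random matrices (the centering relies on $\E(W^*W) = I$, which follows from the left-stochasticity of $\eta$ present in the setting of Section~\ref{sec:application}). The argument then reduces to estimating the three parameters $\sigma(\Sigma - I)$, $\upsilon(\Sigma - I)$, $R_p(\Sigma - I)$ associated with $X = \Sigma - I$, following exactly the structure of Lemmas~\ref{lem:sigma},~\ref{lem:upsilon},~\ref{lem:R_p}.

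For $\sigma(\Sigma - I)^2 = \|\E(\Sigma-I)^2\|_\infty$, independence yields $\E(\Sigma - I)^2 = \frac{1}{d}(\E(W^*W)^2 - I)$; an entry-wise expansion of $\E[(W^*W)^2]$ (identical in spirit to the case analysis in the proof of Lemma~\ref{lem:sigma}) shows that all off-diagonal entries vanish and that each diagonal entry equals $1 + \sum_j \eta_{ji}\rho_j - 2\sum_j\eta_{ji}^2 + \sum_j\E|W_{ji}|^4$ with $\rho_j := \sum_k\eta_{jk} \leq 1$, so Assumptions~\ref{a: profil variance},~\ref{a: 4 moment},~\ref{a: max variance} deliver $\sigma(\Sigma - I) \leq C_\beta/\sqrt{d}$. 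For $\upsilon(\Sigma - I)^2 = \|\Cov(\Sigma - I)\|_\infty$, I would use the dual formulation $\|\Cov(\Sigma - I)\|_\infty = \frac{1}{d}\sup_{\|M\|_2\leq 1}\var[\tr(MW^*W)]$ and then write $\tr(MW^*W) = \sum_k w_k M w_k^*$ as a sum over the independent rows $w_k$ of $W$, which turns the variance into $\sum_k \var(w_k M w_k^*)$; a fourth-moment pairing expansion of each $\var(w_k M w_k^*)$, in complete parallel with Lemma~\ref{lem:upsilon}, exhibits the cancellation of the $|\E(w_k M w_k^*)|^2$ piece against the ``matched-diagonal'' pairing in $\E|w_kMw_k^*|^2$, and leaves terms controlled via $\sum_k\eta_{ki}\eta_{kj} \leq \max_{i,j}\eta_{ij}$ and $\sum_k\E|W_{ki}|^4 \leq C_\beta \max_{i,j}\eta_{ij}$, yielding $\upsilon(\Sigma - I) \leq C_\beta/(\sqrt{d}(\log n)^{\alpha_\beta/2})$. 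For $R_p(\Sigma - I)$, I would mimic Lemma~\ref{lem:R_p}: we have $R_p(\Sigma-I) = \frac{d^{1/(2p)}}{d}(\E\|W^*W - I\|_{2p}^{2p})^{1/(2p)}$, and the bound $\|\cdot\|_{2p}^{2p} \leq n\|\cdot\|_\infty^{2p}$ combined with $\|W^*W - I\|_\infty \leq \|W\|_\infty^2 + 1$ and the moment bound on $\|W\|_\infty$ from \cite[Corollary 3.5]{MR3531673} (already invoked inside the proof of Lemma~\ref{lem:R_p}) delivers $R_p(\Sigma - I) \leq C_\beta/d$ for $p \geq C\log n$.

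Plugging these three estimates into Theorem~\ref{t: van handel} with $m = n$, the leading term $2\sigma(\Sigma - I)$ already provides the $C_\beta/\sqrt{d}$ bound; among the correction terms, $(\log n)R_p \leq C_\beta(\log n)/d$ is of order $1/\sqrt{d}$ precisely when $d \geq (\log n)^2$, which is what forces this threshold in the expectation estimate, while in the high-probability statement the additional factor $t = O(\log n)$ inflates the $R_p$-term to $(\log n)^2/d$, raising the requirement to $d \geq (\log n)^4$. Using $\alpha_\beta \geq 2$, the remaining correction terms $(\log n)^{3/4}\sigma^{1/2}\upsilon^{1/2}$ and $(\log n)^{2/3}\sigma^{2/3}R_p^{1/3}$ only contribute polylogarithmic factors, to be absorbed into the final constants $C_\beta, C_\beta'$. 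The main technical difficulty I anticipate is the careful bookkeeping of pairing cases in the $\sigma$- and $\upsilon$-computations — in particular, identifying the exact cancellation in $\var(w_k M w_k^*)$ between the quartic-moment expansion of $|\E(w_kMw_k^*)|^2$ and the matched-diagonal pairings in $\E|w_kMw_k^*|^2$, since this cancellation is what produces the crucial $(\log n)^{-\alpha_\beta/2}$ gain on $\upsilon$ that keeps the mixed $(\log n)^{3/4}\sigma^{1/2}\upsilon^{1/2}$ correction under control.
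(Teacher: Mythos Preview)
Your strategy is exactly the paper's: apply Theorem~\ref{t: van handel} to $\Sigma-I=\frac1d\sum_s(W_s^*W_s-I)$ and estimate $\sigma,\upsilon,R_p$. Your $\upsilon$ and $R_p$ bounds coincide with the paper's. The substantive discrepancy is in $\sigma$. From your diagonal formula $\sum_j\eta_{ji}\rho_j-2\sum_j\eta_{ji}^2+\sum_j\E|W_{ji}|^4$ you get only $\sigma(\Sigma-I)\le C_\beta/\sqrt d$, whereas the paper asserts the much stronger $\sigma(\Sigma-I)\le C_\beta/(\sqrt d\,(\log n)^{\alpha_\beta/2})$, via an expansion of $\E[(W^*W)^2]$ in which the cross term $\sum_{j}\eta_{ji}\sum_{k\ne i}\eta_{jk}$ does not appear. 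That term is of order $1$ in general (e.g.\ it equals $1-1/n$ when $\eta=J/n$, and indeed $\E[(W^*W)^2]_{ii}=2$ for i.i.d.\ complex Gaussians of variance $1/n$), so your formula looks correct and the paper's stronger $\sigma$-bound does not seem to hold as written; this is worth flagging.

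The gap in your sketch is then the last paragraph. With only $\sigma\le C/\sqrt d$ and $\upsilon\le C/(\sqrt d\,(\log n)^{\alpha_\beta/2})$ one has
\[
(\log n)^{3/4}\sigma^{1/2}\upsilon^{1/2}\ \le\ \frac{C\,(\log n)^{3/4-\alpha_\beta/4}}{\sqrt d},
\]
which for $\alpha_\beta=2$ (i.e.\ any $\beta\le4$) carries an unremovable factor $(\log n)^{1/4}$, independent of $d$. Likewise, at the threshold $d=(\log n)^2$ the term $(\log n)^{2/3}\sigma^{2/3}R_p^{1/3}\le C(\log n)^{2/3}/d^{2/3}$ exceeds $1/\sqrt d$ by a factor $(\log n)^{2/3}/d^{1/6}=(\log n)^{1/3}$. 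So your claim that these corrections ``only contribute polylogarithmic factors, to be absorbed into the final constants'' is not justified. The paper's assembly works \emph{only} because it feeds in the stronger $\sigma$-bound; with your (apparently correct) value $\sigma\asymp 1/\sqrt d$, Theorem~\ref{t: van handel} delivers at best $\E\|\Sigma-I\|_\infty\le C_\beta(\log n)^{1/4}/\sqrt d$ under the stated hypotheses. Either a genuinely sharper parameter estimate is needed, or the conclusion/thresholds of the lemma must be relaxed accordingly.
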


The proof of Lemma \ref{lem:Sigma} is entirely analogous to that of Theorem \ref{th:main-result}. Namely it consists in applying Theorem \ref{t: van handel} to 
$$\Sigma-\E(\Sigma)=\frac{1}{d}\sum_{s=1}^d \left(W_s^*W_s-\E(W_s^*W_s)\right),$$ 
which is a sum of $d$ independent and centered $n\times n$ (self-adjoint) random matrices. In order to do that, we have to estimate the parameters $\sigma,\upsilon,R_p$ for $\Sigma-\E(\Sigma)$. As we will see, the computations are very similar to those of Lemmas \ref{lem:sigma}, \ref{lem:upsilon} and \ref{lem:R_p}, which estimate these parameters for $Y-\E(Y)$. We will thus skip some of the details when repeating the arguments.

\begin{proof}
Set $Z=W^*W-\E(W^*W)$. First, we have
$$ \E\left((\Sigma-I)^2\right) = \frac{1}{d}\E\left(Z^2\right) = \frac{1}{d}\left(\E\left(W^*WW^*W\right)-\E(W^*W)^2\right) .$$
We have already seen that $\E(W^*W)=I$. On the other hand, we have
$$ \E\left(W^*WW^*W\right) = \sum_{i=1}^n \left(\sum_{j,k=1}^n\eta_{ji}\eta_{ki}\right)E_{ii} + \sum_{i=1}^n \left(\sum_{j=1}^n\E|W_{ji}|^4-\eta_{ji}^2\right)E_{ii} . $$
The first term on the right-hand side of the above equality is equal to $I$, because, for each $1\leq i\leq n$, $\sum_{j=1}^n\eta_{ji}=1$. We thus have 
$$ \E\left(W^*WW^*W\right)-\E(W^*W)^2 = \sum_{i=1}^n \left(\sum_{j=1}^n\E|W_{ji}|^4-\eta_{ji}^2\right)E_{ii}, $$
and therefore
\begin{align*} 
\left\|\E\left(W^*WW^*W\right)-\E(W^*W)^2\right\|_\infty & = \max_{1\leq i\leq n} \left(\sum_{j=1}^n\E|W_{ji}|^4-\eta_{ji}^2\right) \\
& \leq  (\mathbf C^24^\beta-1)\max_{1\leq i\leq n} \left(\sum_{j=1}^n\eta_{ji}^2\right) \\
& \leq \frac{(\mathbf C^24^\beta-1)\mathbf C'}{(\log n)^{\alpha_\beta}}. 
\end{align*}
Hence, we have shown that
$$ \sigma(\Sigma-I) \leq \frac{C_\beta}{(\log n)^{\alpha_\beta/2}}\times\frac{1}{\sqrt{d}} \leq \frac{C_\beta}{\log n}\times\frac{1}{\sqrt{d}}, $$
where the last inequality is because $\alpha_\beta\geq 2$.

Second, we have
$$ \Cov\left(\Sigma-I\right) = \frac{1}{d}\Cov(Z) = \frac{1}{d}\Cov\left(W^*W-\E(W^*W)\right), $$
and we can rewrite
\begin{align*}
    & \left\|\Cov\left(W^*W-\E(W^*W)\right)\right\|_\infty \\
    & \quad = \sup_{\|M\|_2\leq 1} \tr\left[ \left(M\otimes\overbar M\right)\left(\E\left(W^*W\otimes\overbar W^*\overbar W\right) - \E\left(W^*W\right)\otimes\E\left(\overbar W^*\overbar W\right)\right) \right] .
\end{align*} 
We already know that $\E(W^*W)=I$, while we have
\begin{align*}
    & \E\left(W^*W\otimes\overbar W^*\overbar W\right) \\
    & \quad = \sum_{i,j=1}^n\left(\sum_{k,l=1}^n\eta_{ki}\eta_{lj}\right)E_{ii}\otimes E_{jj} + \sum_{i,j=1}^n\left(\sum_{k=1}^n\eta_{ki}\eta_{kj}\right)E_{ij}\otimes E_{ij} \\
    & \qquad + \sum_{i,j=1}^n\left(\sum_{k=1}^n\zeta_{ki}\overbar\zeta_{kj}\right)E_{ij}\otimes E_{ji} + \sum_{i=1}^n\left(\sum_{k=1}^n\left(\E|W_{ki}|^4-2\eta_{ki}^2-|\zeta_{ki}|^2\right)\right)E_{ii}\otimes E_{ii}.
\end{align*}
The first term on the right-hand side of the above equality is equal to $I$, because, for each $1\leq i\leq n$, $\sum_{j=1}^n\eta_{ji}=1$. We thus have 
\begin{align*}
    & \E\left(W^*W\otimes\overbar W^*\overbar W\right) - \E\left(W^*W\right)\otimes\E\left(\overbar W^*\overbar W\right) \\
    & \quad = \sum_{i,j=1}^n\left(\sum_{k=1}^n\eta_{ki}\eta_{kj}\right)E_{ij}\otimes E_{ij} + \sum_{i,j=1}^n\left(\sum_{k=1}^n\zeta_{ki}\overbar\zeta_{kj}\right)E_{ij}\otimes E_{ji} \\
    & \qquad + \sum_{i=1}^n\left(\sum_{k=1}^n\left(\E|W_{ki}|^4-2\eta_{ki}^2-|\zeta_{ki}|^2\right)\right)E_{ii}\otimes E_{ii}.
\end{align*}
Now, it is easy to upper bound, for each term on the right-hand side of the above equality, its trace with $M\otimes\overbar M$, for $M\in\mathcal M_n(\mathbb C)$ such that $\|M\|_2\leq 1$, exactly as it was done in the proof of Lemma \ref{lem:upsilon}. We get that this quantity is upper bounded by $\mathbf C'/(\log n)^{\alpha_\beta}$ for the first two terms and by $(\mathbf C^24^\beta+3)\mathbf C'/(\log n)^{\alpha_\beta}$ for the third term. Hence, we have shown that
$$ \upsilon(\Sigma-I) \leq \frac{C_\beta}{(\log n)^{\alpha_\beta/2}}\times\frac{1}{\sqrt{d}} \leq \frac{C_\beta}{\log n}\times\frac{1}{\sqrt{d}}, $$
where the last inequality is because $\alpha_\beta\geq 2$.

Finally, arguing as in the proof of Lemma \ref{lem:R_p}, we have for $p\geq c\log n$
$$ R_p(\Sigma-I) \leq \frac{(dn)^{1/p}}{d}\left(\E\|Z\|_\infty^p\right)^{1/p} \leq \frac{C'}{d}\left(\E\|Z\|_\infty^p\right)^{1/p} . $$
And next for $p\leq C\log n$, 
$$ \left(\E\|Z\|_\infty^p\right)^{1/p} \leq \left(\E\left\|W^*W\right\|_\infty^p\right)^{1/p} + \left\|\E\left(W^*W\right)\right\|_\infty = \left(\E\left\|\W\right\|_\infty^{2p}\right)^{1/p} + \left\|I\right\|_\infty \leq C''+1, $$
where the last inequality is by using the computation already performed in the proof of Lemma \ref{lem:R_p}. Hence, we have shown that, for $c\log n\leq p\leq C\log n$,
$$ R_p(\Sigma-I) \leq \frac{C'''}{d}.$$

With these three ingredients at hand, we can straightforwardly conclude using Theorem \ref{t: van handel}. The upper bounds $\sigma_*(\Sigma-I)$ and $R_*(\Sigma-I)$ that we use on $\sigma(\Sigma-I)$ and $R(\Sigma-I)$, namely $C_\beta/(\log n)\sqrt{d}$ and $C'''/d$ respectively, are indeed such that $\sigma_*(\Sigma-I)\geq C_\beta'(\log n)^3R_*(\Sigma-I)$ as soon as $d\geq(\log n)^4$. Hence, by applying the first estimate, we obtain
$$ \E \Vert \Sigma-I \Vert_\infty \leq \frac{C_\beta}{\sqrt{d}} \left( \frac{1}{\log n} + \frac{(\log n)^{3/4}}{\log n} + \frac{\log n}{d^{1/12}(\log n)^{5/6}}  \right) \leq \frac{C_\beta'}{\sqrt{d}}, $$ 
where the last inequality is by using the hypothesis that $d\geq (\log n)^{2}$. And choosing $R=C'''/d^{3/4}$, so that the assumption $R_*(\Sigma-I)^{1/2}\sigma_*(\Sigma-I)^{1/2}\leq R\leq\sigma_*(\Sigma-I)$ is indeed satisfied, we get $p_R\leq 1/d^{1/4}$ by Markov inequality. So applying the second estimate with $t=2\log n$, we additionally obtain that, with probability at least $1-1/n-1/d^{1/4}$, hence a fortiori at least $1-2/d^{1/4}$,
$$ \Vert \Sigma-I \Vert_\infty \leq \frac{C_\beta}{\sqrt{d}} \left( \frac{1}{\log n} + \frac{(\log n)^{3/4}}{\log n} + \frac{\log n}{d^{1/12}(\log n)^{2/3}} + \frac{(\log n)^{1/2}}{\log n} \right) \leq \frac{C_\beta'}{\sqrt{d}}, $$ 
where the last inequality is by using the hypothesis that $d\geq (\log n)^{4}$.
\end{proof}

\begin{theorem} \label{th:expander-TP}
Let $n, d\in \mathbb{N}$ with $d\geq(\log n)^{12}$ and consider $W_1,\ldots,W_d$ independent copies of an $n\times n$ random matrix $W$ satisfying Assumptions \ref{a:indep entries}, \ref{a: profil variance}, \ref{a: 4 moment}, \ref{a: max variance} and \ref{a:eta}. Set $\Sigma=\frac{1}{d}\sum_{s=1}^dW_s^*W_s$ and, for each $1\leq s\leq d$, $\tilde W_s=W_s\Sigma^{-1/2}$. Define the random CP map $\tilde\Phi$ on $\mathcal M_n(\mathbb C)$, having Kraus rank at most $d$, as
$$\tilde\Phi:X\in\mathcal M_n(\mathbb C)\mapsto \frac{1}{d}\sum_{s=1}^d \tilde W_sX\tilde W_s^*\in\mathcal M_n(\mathbb C). $$
Then, $\tilde\Phi$ is by construction TP, so in particular $s_1(\tilde\Phi) \geq 1$, and such that
 $$ \mathbb P\left( s_2(\tilde\Phi) \leq \frac{C_\beta}{\sqrt{d}} \right) \geq 1-\frac{6}{d^{1/4}}, $$
 where $C_\beta>0$ is a constant depending only on $\beta$.
\end{theorem}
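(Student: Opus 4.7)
The strategy is to leverage the TP construction and Lemma \ref{lem:Sigma} to show that $\tilde\Phi$ is only a small perturbation of $\Phi$, then import the singular value estimates already established for $\Phi$ in Theorem \ref{th:expander}.

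First I would verify directly that $\tilde\Phi$ is TP: by construction,
$$\frac{1}{d}\sum_{s=1}^d \tilde W_s^*\tilde W_s = \Sigma^{-1/2}\Bigl(\frac{1}{d}\sum_{s=1}^d W_s^*W_s\Bigr)\Sigma^{-1/2}=\Sigma^{-1/2}\Sigma\Sigma^{-1/2}=I.$$
Since $\tilde\Phi$ is then a quantum channel, Perron--Frobenius theory (as recalled in the introduction) gives $|\lambda_1(\tilde\Phi)|=1$, and hence $s_1(\tilde\Phi)\geq 1$ automatically.

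Next, I would express the matrix representation of $\tilde\Phi$ in terms of $M_\Phi$. Since $\Sigma^{-1/2}$ is self-adjoint, $\overbar{\Sigma^{-1/2}}=(\Sigma^{-1/2})^t$ and
$$M_{\tilde\Phi}=\frac{1}{d}\sum_{s=1}^d(W_s\otimes\overbar W_s)\bigl(\Sigma^{-1/2}\otimes\overbar{\Sigma^{-1/2}}\bigr)=M_\Phi\,\bigl(\Sigma^{-1/2}\otimes\overbar{\Sigma^{-1/2}}\bigr).$$
So $M_{\tilde\Phi}-M_\Phi=M_\Phi\,\bigl(\Sigma^{-1/2}\otimes\overbar{\Sigma^{-1/2}}-I\bigr)$, and I bound the operator norm of this difference as a product.

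For the factor $\Sigma^{-1/2}\otimes\overbar{\Sigma^{-1/2}}-I$, Lemma \ref{lem:Sigma} yields $\|\Sigma-I\|_\infty\leq C_\beta/\sqrt d$ with probability at least $1-1/n$, which for $d$ large enough compared to a constant depending on $\beta$ gives $\|\Sigma-I\|_\infty\leq 1/2$. Spectrally, this implies $\|\Sigma^{-1/2}-I\|_\infty\leq C'_\beta/\sqrt d$ (via the elementary estimate $|\lambda^{-1/2}-1|\leq C|\lambda-1|$ on $\lambda\in[1/2,3/2]$), and then writing $\Sigma^{-1/2}\otimes\overbar{\Sigma^{-1/2}}-I=(\Sigma^{-1/2}-I)\otimes\overbar{\Sigma^{-1/2}}+I\otimes(\overbar{\Sigma^{-1/2}}-I)$ and using the triangle inequality gives $\|\Sigma^{-1/2}\otimes\overbar{\Sigma^{-1/2}}-I\|_\infty\leq C''_\beta/\sqrt d$. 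For the other factor, Theorem \ref{th:expander} already yields $s_1(M_\Phi)\leq 1+C_\beta/\sqrt d$ with high probability. Taking the intersection of these high probability events (and of the one from Theorem \ref{th:expander} for $s_2$), we obtain $\|M_{\tilde\Phi}-M_\Phi\|_\infty\leq C'''_\beta/\sqrt d$.

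Finally, Weyl's inequality for singular values gives
$$s_2(\tilde\Phi)=s_2(M_{\tilde\Phi})\leq s_2(M_\Phi)+s_1(M_{\tilde\Phi}-M_\Phi)\leq \frac{C_\beta}{\sqrt d}+\frac{C'''_\beta}{\sqrt d},$$
which is the desired bound. A union bound over the two high-probability events (each of probability $\geq 1-1/n$) costs only a constant factor and can be absorbed at the cost of slightly increasing $n$; alternatively, one can rerun the proofs of Theorem \ref{th:expander} and Lemma \ref{lem:Sigma} with $t=4\log n$ in place of $t=3\log n$ to keep the $1-1/n$ form. The only nonroutine step is the passage from $\|\Sigma-I\|_\infty$ small to $\|\Sigma^{-1/2}-I\|_\infty$ small, and this is completely standard once $d$ is large enough that $\Sigma$ is invertible with good condition number; so I do not expect any serious obstacle.
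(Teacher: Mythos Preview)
Your proposal is correct and follows essentially the same route as the paper: express $M_{\tilde\Phi}=M_\Phi(\Sigma^{-1/2}\otimes\overbar{\Sigma^{-1/2}})$, bound $\|M_{\tilde\Phi}-M_\Phi\|_\infty$ via the tensor decomposition $(\Sigma^{-1/2}-I)\otimes\overbar{\Sigma^{-1/2}}+I\otimes(\overbar{\Sigma^{-1/2}}-I)$ together with Lemma~\ref{lem:Sigma} and the bound $s_1(M_\Phi)\leq 1+C_\beta/\sqrt d$, and conclude by Weyl's inequality for singular values combined with the $s_2$ bound from Theorem~\ref{th:expander}. If anything you are slightly more explicit than the paper about the spectral calculus step from $\|\Sigma-I\|_\infty$ to $\|\Sigma^{-1/2}-I\|_\infty$ and about handling the union bound, but these are cosmetic differences.
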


\begin{proof}
    We will relate the second largest singular values of the CP map $\tilde\Phi$ to that of the CP map $\Phi$, from Theorem \ref{th:expander}. First observe that, by definition,
    $$ M_{\tilde\Phi} = M_\Phi\left(\Sigma^{-1/2}\otimes\overbar\Sigma^{-1/2}\right) . $$
    We thus have
    \begin{align*}
        \left\|M_{\tilde\Phi}-M_\Phi\right\|_\infty & = \left\|M_\Phi\left(\Sigma^{-1/2}\otimes\overbar\Sigma^{-1/2}-I\right)\right\|_\infty \\
        & \leq \left\|M_\Phi\right\|_\infty \left\|\Sigma^{-1/2}\otimes\overbar\Sigma^{-1/2}-I\right\|_\infty \\
        & \leq \left\|M_\Phi\right\|_\infty\left(\left\|I\right\|_\infty + \left\|\Sigma^{-1/2}\right\|_\infty\right) \left\|\Sigma^{-1/2}-I\right\|_\infty,
    \end{align*}
    where the last inequality follows from the triangle inequality, after writing 
    $$ \Sigma^{-1/2}\otimes\overbar\Sigma^{-1/2}-I=(\Sigma^{-1/2}-I)\otimes\overbar\Sigma^{-1/2}+I\otimes(\overbar\Sigma^{-1/2}-I). $$ 
    Now, we know from Theorem \ref{th:expander} that, with probability larger than $1-2/d^{1/4}$, $\|M_\Phi\|_\infty\leq C_\beta/\sqrt{d}$. And we know from Lemma \ref{lem:Sigma} that, with probability larger than $1-2/d^{1/4}$, $\|\Sigma^{-1/2}-I\|_\infty\leq C_\beta/\sqrt{d}$ and hence $\|\Sigma^{-1/2}\|_\infty\leq 1+C_\beta/\sqrt{d}$. We thus have that, with probability larger than $1-4/d^{1/4}$,
    $$ \left\|M_{\tilde\Phi}-M_\Phi\right\|_\infty \leq \frac{C_\beta'}{\sqrt{d}}. $$
    We can then conclude using Weyl inequalities for singular values that, with probability larger than $1-6/d^{1/4}$,
    $$ s_2\left(M_{\tilde\Phi}\right) \leq s_2\left(M_{\Phi}\right) + s_1\left(M_{\tilde\Phi}-M_{\Phi}\right) \leq \frac{C_\beta''}{\sqrt{d}}, $$
where the last inequality is because we know from Theorem \ref{th:expander} that, with probability larger than $1-2/d^{1/4}$, $s_2(M_{\Phi})\leq C_\beta/\sqrt{d}$. And this concludes the proof of Theorem \ref{th:expander-TP}.
\end{proof}

As a final comment, let us briefly explain how one could obtain, for the random CP map $\Phi$ described in Theorem \ref{th:expander}, a lower bound on its largest eigenvalue and an upper bound on its second largest eigenvalue (in addition to the lower bound on its largest singular value and the upper bound on its second largest singular value given in Theorem \ref{th:expander}). For that we need to make one extra assumption, namely that the variance matrix $\eta$ is doubly stochastic (and not only left stochastic and right sub-stochastic). This guarantees that the random CP map $\Phi$ satisfies on average both the TP condition \eqref{eq:TP-condition} and the unital condition \eqref{eq:unital-condition}, and more precisely that both $\Sigma=\frac{1}{d}\sum_{s=1}^dW_s^*W_s$ and $\Theta=\frac{1}{d}\sum_{s=1}^dW_sW_s^*$ are with high probability close to $I$ (by Lemma \ref{lem:Sigma} for $\Sigma$ and its exact analogue for $\Theta$). 

In order to lower bound the largest eigenvalue of $\Phi$, we use the following characterization (see e.g.~\cite[Theorem 6.3]{wolf2012}), valid for any irreducible CP map, which $\Phi$ is almost surely (by the same argument as in \cite[Fact 1.2]{lancien2022} and the discussion following it),
$$ |\lambda_1(\Phi)| = \sup_{X\geq 0} \sup\left\{\lambda\in\mathbb R : \Phi(X)\geq\lambda X\right\}. $$
We deduce from this identity that, in particular,
$$ |\lambda_1(\Phi)| \geq \sup\left\{\lambda\in\mathbb R : \Phi(I)\geq\lambda I\right\}. $$
Now, $\Phi(I)$ is nothing else than $\Theta$. So we know that, with high probability, $\|\Phi(I)-I\|_\infty\leq C_\beta/\sqrt{d}$, and consequently $\Phi(I)\geq(1-C_\beta/\sqrt{d})I$. We therefore have that, with high probability, $|\lambda_1(\Phi)|\geq 1-C_\beta/\sqrt{d}$.

In order to upper bound the second largest eigenvalue of $\Phi$, we start from Weyl's majorant theorem (see e.g.~\cite[Theorem~II.3.6]{MR1477662}), which tells us that
$$ |\lambda_1(\Phi)| + |\lambda_2(\Phi)| \leq s_1(\Phi) + s_2(\Phi). $$
Now, we know from Theorem \ref{th:expander} that, with high probability, $s_2(\Phi)\leq C_\beta'/\sqrt{d}$ and $s_1(\Phi)\leq 1+ C_\beta'/\sqrt{d}$ (by the exact same argument as the one used to show that $s_1(\Phi)\geq 1- C_\beta'/\sqrt{d}$). And we have just shown that, with high probability, $|\lambda_1(\Phi)|\geq 1-C_\beta/\sqrt{d}$. So we get that, with high probability, $|\lambda_2(\Phi)| \leq (C_\beta+2C_\beta')/\sqrt{d}$.

Note that, as we have just explained, when imposing that the variance matrix $\eta$ is doubly stochastic, we guarantee that our random CP map $\Phi$ is with high probability close to unital and thus has a fixed point $\hat\rho$ that is close to the maximally mixed state $I/n$, hence with close to maximal entropy. This means that, with this extra assumption on $\eta$, condition (3) in the definition of a quantum expander is automatically satisfied. However, our result remains interesting even without it, i.e.~whether the entropy of the fixed point $\hat\rho$ is large or small. Indeed, it provides examples of quantum channels that have a gap between their first and second largest singular values that is as large as it can be as soon as their number $d$ of Kraus operators is larger than $(\log n)^{12}$, hence in particular for $(\log n)^{12}\leq d\ll n$. And the latter regime is one where the construction is non-trivial whatever the entropy of $\hat\rho$ is, as explained in Section \ref{sec:def-expanders}.

\section{Summary and perspectives} \label{sec:conclusion}

This note aimed at shedding light on how recent results from random matrix theory can be readily used to exhibit quantum expanders from their classical counterparts. Let us in fact emphasize again on this point, which was already mentioned in the introduction. Our result provides a recipe for constructing a random optimal quantum expander from any optimal classical expander (either random or deterministic). Indeed, let $A\in\mathcal M_n(\mathbb R)$ be the adjacency matrix of a $d$-regular graph on $n$ vertices satisfying $d\geq (\log n)^{12}$ and $s_2(A)\leq C\sqrt{d}$, for some universal constant $C>0$. Let $W\in\mathcal M_n(\mathbb C)$ be a random matrix with independent centered entries, such that $\mathbb E|W_{ij}|^2=A_{ij}/d$, $1\leq i,j\leq n$. Suppose additionally that the moments of the $W_{ij}$'s satisfy Assumption \ref{a: 4 moment}. Next, sample $W_1,\ldots,W_d$ independent copies of $W$, set $K_s=W_s/\sqrt{d}$, $1\leq s\leq d$, and let $\Phi$ be the random CP map on $\mathcal M_n(\mathbb C)$ that has the $K_s$'s, suitably renormalized, as Kraus operators. Then $\Phi$ is a random quantum channel that satisfies with high probability $s_2(\Phi)\leq C'/\sqrt{d}$.

We have thus shown in this note that a wide class of random quantum channels are with high probability optimal quantum expanders. Prior to this work, only three specific models were known to have this property, all of them requiring a large amount of randomness to be implemented, as they relied on the sampling of either Haar unitary or Gaussian matrices. Here we prove that sparse matrices whose entries have a much simpler distribution (e.g.~Bernoulli random variables) actually do the job as well. This is a first step towards derandomization: exhaustive search for an explicit example of an optimal quantum expander might now be within reach. This could be of great interest in practice. Quantum expanders indeed find applications in many sub-fields of quantum information. The seminal ones were inspired by applications of classical expanders, and therefore more computer science orientated (quantum encryption \cite{ambainis2004}, quantum interactive proof systems \cite{BenAroya2008}, etc). But they also turned out to be crucial objects in quantum condensed matter physics, in particular to design many-body quantum systems where subsystems have large entanglement entropy but correlations between them that decay fast \cite{hastings2007,gonzalez2018,lancien2022}. 

One open question at this point is: how to generalize the results we have obtained here for random matrices $W_s$ with independent entries to the case of correlated entries? Indeed, the operator norm estimate from \cite{vanHandel,vanHandel2} that we use does not require independent entries. We only add this assumption to be able to get an optimal estimate without technical considerations. But the trade-off between correlation and moment control would be interesting to study. For instance, the same spectral gap scaling as the one we get is known to hold when the $W_s$'s are independent uniformly distributed unitaries \cite{hastings2007,pisier2014,timhadjelt2024}. One can wonder whether this setting (where all moments are tightly controlled) could be treated with our approach, and whether the result could be extended to the case of independent unitaries distributed according to more general measures, which are known to somehow resemble the uniform measure. Examples include approximate $t$-design measures (which are by definition such that their moments up to order $t$ are close to those of the uniform measure), the uniform measure on sub-groups of the unitary group (which have been shown to have moments that can be compared to those of the uniform measure in a systematic way \cite{bordenave2024}), etc. In fact, the case of $t$-design measures has been recently studied in \cite{lancien2024}, while this work was in its reviewing process. It has been proven there that sampling Kraus operators from a $2$-design measure is enough to obtain with high probability an optimal expander (and more generally that sampling Kraus operators from a $2t$-design measure is enough to obtain with high probability an optimal so-called $t$-copy tensor product expander).

Another interesting generalization would be to look at models of random quantum channels that have a particular symmetry, and thus a degenerate largest eigenvalue. Particularly relevant examples that have been studied in the past include the cases where Kraus operators are tensor power matrices \cite{hastings2009,harrow2009} or permutation matrices \cite{bordenave2019}. In physics, quantum channels whose Kraus operators are invariant under the action of a given group are ubiquitous, for instance when describing many-body quantum systems that have a local symmetry. It would be interesting to understand how the spectral gap of such channels typically scales, as this would give us information on how fast the correlations in such systems typically decay.

Finally, as mentioned in the introduction, we believe that the main mathematical result our analysis relies on (which is taken from \cite{vanHandel,vanHandel2} and stated here as Theorem \ref{t: van handel}) could find several other applications in quantum information. It could potentially be applied to any situation where one needs to estimate the deviation from average of a sum of independent random operators, and thus be useful to study all kinds of typical properties of quantum systems.

\subsection*{Acknowledgments} The authors are thankful to Ramon van Handel for bringing to their attention a bug in a first version of this work, and for the suggested corrections. They are also grateful to an anonymous reviewer whose extremely valuable comments helped improve this work greatly. Part of this work was conducted when the second named author visited the first named author in Institut Fourier at Universit\'e Grenoble Alpes. The first named author was supported by the ANR projects ESQuisses (grant number ANR-20-CE47-0014-01), STARS (grant number ANR-20-CE40-0008) and QTraj (grant number ANR-20-CE40-0024-01). The second named author was supported by the NYUAD Center for Interdisciplinary Data Science \& AI, funded by Tamkeen under the NYUAD Research Institute Award CG016.

\subsection*{Conflict of interest} On behalf of all authors, the corresponding author states that there is no conflict of interest.

\subsection*{Data availability} Data sharing not applicable to this article as no datasets were generated or analysed during the current study.

	\addcontentsline{toc}{section}{References}
	\bibliographystyle{alpha}
	\bibliography{references.bib}
	
\end{document}